\def\ps@pprintTitle{%
  \let\@oddhead\@empty
  \let\@evenhead\@empty
  \let\@oddfoot\@empty
  \let\@evenfoot\@oddfoot
}
\tikzstyle{startstop} = [rectangle, rounded corners, 
\tikzstyle{io} = [trapezium, 
\tikzstyle{process} = [rectangle,  
\tikzstyle{decision} = [diamond, 
\tikzstyle{arrow} = [thick,->,>=stealth]
\newenvironment{proof}{\paragraph{Proof:}}{\hfill$\square$}
\newtheorem{theorem}{Theorem}[section]
\newtheorem{result}{Result}[section]
\newtheorem{definition}{Definition}[section]
\newtheorem{lemma}{Lemma}[section]
\newtheorem{corollary}[theorem]{Corollary}
\newcommand{\ba}{\begin{array}}
\newcommand{\ea}{\end{array}}
\newcommand{\bfl}{\begin{flushleft}}
\newcommand{\efl}{\end{flushleft}}
\newcommand{\bfr}{\begin{flushright}}
\newcommand{\efr}{\end{flushright}}
\newcommand{\bt}{\begin{theorem}}
\newcommand{\bd}{\begin{definition}}
\newcommand{\ed}{\end{definition}}
\newcommand{\et}{\end{theorem}}
\newcommand{\bl}{\begin{lemma}}
\newcommand{\el}{\end{lemma}}
\newcommand{\ee}{\end{exam}}
\newcommand{\bcor}{\begin{corollary}}
\newcommand{\ecor}{\end{corollary}}
\begin{document}
\begin{frontmatter}

\title{Bayesian reliability acceptance sampling plan with optional warranty under hybrid censoring} 






\author[label1]{Rathin Das \corref{cor1}}
\address[label1]{ SQC and OR Unit, Indian Statistical Institute, Kolkata, India}

\cortext[cor1]{Corresponding author}

\ead{rathindas65@gmail.com}

\author[label1]{Biswabrata Pradhan}

\begin{abstract}
This work considers design of Bayesian reliability acceptance sampling plan (RASP) under hybrid censored life test for the products sold under optional warranty. The consumer and manufacturer agree on a common lifetime distribution of the product. However, they differ in the assessment of the prior distributions because of the adversarial nature of the consumer and manufacturer. The consumer takes decision based on his/her utility and prior belief without warranty offer by the manufacturer. If the decision is rejection, manufacturer provides warranty offer to the consumer. If the consumer rejects the lot with a warranty, the manufacturer conduct life test under hybrid censoring scheme (HCS) and provide lifetime information to the consumer. The consumer updates his/her belief based on life time information provided by the manufacturer.  The consumer then takes decision of acceptance or rejection of lot based on updated belief. The manufacturer’s task is to determine the optimal life testing plan. 

\end{abstract}

\begin{keyword}
Acceptance sampling  \sep Reliability \sep Life testing \sep Warranty Analysis

\end{keyword}

\end{frontmatter}

\section{Introduction}
The reliability of any product has a great impact on a consumer's decision. The consumer can take decision to accept or reject a lot based on the existing information about the product. In the event of rejection of a lot by the consumer, the manufacturer offer warranty to the consumer. When a lot is sold under a warranty policy, it increases the chance of accepting the lot. Due to warranty, the manufacturer incurs an expense if the product fails  during warranty period. To compensate expenses due to warranty offer, the manufacturer can increase the selling price. Nevertheless, the consumer always prefer products with low prices. Therefore, a negotiation occurs between the manufacturer and the consumer and an optional warranty is worth considering for decision-making. For non-repairable products, the rebate warranty policy is commonly used. In the rebate warranty policy, if an item fails during the warranty period, the manufacturer gives full or some proportional compensation to the consumer. Therefore, the acceptance utility of the consumer would be increased by a warranty policy. If a product from the accepted lot fails during the warranty period, then the manufacturer gives full compensation to the consumer. This warranty is called a free-replacement rebate warranty (rebate FRW) \cite{murthy1992product}. In pro-rata warranty policy (PRW) \cite{menke1969determination}, the manufacturer gives a pro-rata compensation or rebate coupon to the consumer when the product fails during the warranty period.  Here we consider a combined FRW-PRW rebate warranty policy which is a combination of rebate FRW and PRW policy that was introduced by Thomas \cite{thomas1983optimum}. The FRW policy is used in period $[0,w_1)$ and the PRW policy is used in period $[w_1,w_2)$ where $w_1<w_2$ are pre-specified time points. Note that,  it reduces to FRW policy when $w_1=w_2$ and reduces to PRW policy when $w_1=0$.

If the consumer rejects the lot, the manufacturer conducts a life test to obtain additional information and update the consumer´s belief. Lindley and Singpurwalla  \cite{lindley1991evidence,lindley1993adversarial} termed this test as adversarial life test. They considered a sequential complete life test using exponential distribution  \cite{lindley1993adversarial}. Rufo et al. \cite{rufo2014adversarial} extended it to the exponential family of distributions. Also, a Bayesian sequential negotiation model was discussed by Rufo et al. \cite{rufo2016bayesian} for multiple parties. 

 Lindley and Singpurwalla \cite{lindley1993adversarial} obtained RASP based on complete data. However, due to time, cost, and other resource limitations, censored life tests are conducted to collect lifetime information. Type-I, type-II and hybrid censoring schemes are the most common types of censoring schemes applied in life-testing. In type-I censoring, the life test is terminated at a predetermined time $T_0$ and in type-II censoring scheme, the life test is terminated after a fixed number of failures $r<n$. The HCS is a combination of type-I and type-II censoring schemes. Two types of HCSs are considered in practice, namely type-I HCS and type-II HCS. In type-I HCS, the life test is terminated after the failure of $r^{th}$ or at the time $T_0$, whichever is earlier (see  \cite{epstein1954truncated}).

There have been a number of works on determination Bayesian variable sampling plans under different censoring schemes for the exponential distributions. For example, Yeh \cite{yeh1990optimal,yeh1994bayesian} considered the optimal design of RASPs using the Bayesian decision-theoretic approach under type II censoring and type I censoring schemes. Yeh and Choy \cite{yeh1995bayesian} considered it under random censoring scheme. Chen et al. \cite{chen2007bayesian} and Lin et al. \cite{lin2008exact,lin2010corrections} considered determination of RASP under HCS. Lin \cite{lin2002bayesian} introduced the Bayes decision function for the exponential distribution under type-I censoring. For hybrid censored data, the Bayes decision was introduced by Liang and Yang \cite{liang2013optimal}. These works considered the minimization of total expected costs with respect to a single prior distribution for the parameters of the lifetime distribution and utility function agreed by both manufacturer and consumer. We consider that the consumer and the manufacturer agree on a common lifetime distribution but differ on prior distributions and 
 utility functions because of their adversarial nature. In this context, there is no work on determination of RASP under censoring, all works were based on a single decision criterion and the warranty of a product was not used in the negotiation. In our work, we conduct the life test under HCS. The warranty of the product is used in the negotiation. 

In our work, first, we consider that the manufacturer not only knows his/her own utility and prior but also knows the consumer’s utility, prior and actions. However, this is a strong knowledge assumption by the manufacturer. In real-life scenarios, sometimes the manufacturer does not know the consumer’s utility, prior
and actions. We introduce a sampling plan called random decision-making sampling plan (RDSP) where consumer's action is uncertain. It is assumed that the form of the consumer’s utility function and prior distribution are known to the manufacturer, but the parameters of utility function
and prior are unknown to the manufacturer. It is assumed that these parameters are random. 

The organization of the paper is as follows. The model is discussed in Section \ref{model}. Determination of optimum RASPs for exponential and Weibull distributions are considered in Section \ref{we}. The RDSP for exponential distribution is discussed in section \ref{ARA}. Numerical example in different situations is discussed in Section \ref{numerical}. A real data set is analyzed to demonstrate the proposed model in Section \ref{real}.  The conclusion is made in Section \ref{conclu}.
\section{Development of the Model with an optional warranty}\label{model}
Suppose $X$ denotes the lifetime of a product with cumulative distribution function (cdf) $F_{\boldsymbol{\theta}}$ and probability density function (pdf) $f_{\boldsymbol{\theta}}$, where $\boldsymbol{\theta}=(\theta_1,\cdots,\theta_p)$ is a vector of parameters. The manufacturer negotiates with the consumer to sale of his/her product. The consumer may accept or reject the
lot based on prior $p_C(\boldsymbol{\theta})$ of $\boldsymbol{\theta}$ and a utility function $\mathcal{U}_C(. \ | \ \boldsymbol{\theta})$, $.\equiv \mathcal{A}, \mathcal{R}$, where $\mathcal{A}$ and $\mathcal{R}$ denote acceptance and rejection, respectively. The consumer accepts the lot  (Lindley and Singpurwalla \cite{lindley1991evidence}) if
 \begin{align}\label{1} E_{p_C(\boldsymbol{\theta})}(\mathcal{U}_C(\mathcal{A}\ | \ \boldsymbol{\theta})) 
 \geq E_{p_C(\boldsymbol{\theta})}(\mathcal{U}_C(\mathcal{R} \ | \ \boldsymbol{\theta})). 
 \end{align}
 If $\mathcal{U}_C(\mathcal{A}\ | \ \boldsymbol{\theta})$ and $\mathcal{U}_C(\mathcal{A}\ | \ \boldsymbol{\theta})$ are taken as a loss functions instead of utility (profit) functions, the inequality is reversed.
 Initially, the manufacturer offers a lot to the consumer. The decision of acceptance or rejection of lot is taken by the consumer based on initial utility function $\mathcal{U}^0_C(. \ | \ \boldsymbol{\theta})$.
If the utility function $\mathcal{U}^0_{\mathcal{C}}(. \ |\  \boldsymbol{\theta})$ fail to satisfy inequality (\ref{1}), the manufacturer offers a warranty to the consumer, which can be purchased with the product at an additional cost. The consumer changes his/her utility function with a warranty offer from the manufacturer.  Now, the consumer takes decision with the warranty using the inequality (\ref{1}). If the inequality is satisfied, the lot is accepted with the warranty offer. If the inequality (\ref{1}) is not satisfied with warranty offer, the manufacturer performs a life test to update the consumer's belief. Finally, the decision is taken by the consumer using updated utility function based on lifetime data.

\subsection{Consumer's utility function}
The consumer has a requirement of minimum lifetime $L$
for the product. If the product fails before the lifetime $L$, the consumer has a certain amount of loss. It is assumed that the loss function is linearly decreasing with his/her lifetime. Therefore the loss function  in terms of lifetime can be taken as
\begin{align}\label{theta}
    \mathcal{U}^0_\mathcal{C}(\mathcal{A}\ |\ X)=\begin{cases}
       a_1\left(1-\frac{X}{L}\right)+a_2&\text{if }X\leq L\\
       a_2&\text{if }X> L
    \end{cases}
\end{align}
and 
\begin{align}\label{theta1}
    \mathcal{U}^0_\mathcal{C}(\mathcal{R}\ |\ X)=a_3
\end{align}
where $a_1$ is the proportional loss with the lifetime of the product if it fails before the time $L$, $a_2$ is the fixed cost which is paid by the consumer when the lot is accepted and $a_3$ is the loss due to rejection. The
consumer accepts the lot without warranty and life-testing if
\begin{align}\label{21}
     E_{p_C(\boldsymbol{\theta})}\left[E_{X\ |\ \boldsymbol{\theta}}\left\{\mathcal{U}^0_C(\mathcal{A}\ |\ X)\right\}\right]\leq a_3.
\end{align}
If the inequality (\ref{21}) does not hold, the manufacturer offers a warranty to the consumer. The consumer updates his/her utility function using warranty offer. Here, the manufacturer sells his/her products under the combined FRW-PRW rebate warranty policy, as described in Introduction.  Let $c_s$ be the selling price of the item without any warranty. If the lot is sold with a warranty, the consumer needs to pay the additional cost $c_w$ per item for the lot due to warranty offer. Therefore, $(c_s+c_w)$ is the selling price of an item with the warranty. Let $q(X)$ be the amount of rebate of an item with lifetime $X$. Then, from Murthy and Blischke \cite{murthy1992product} for rebate warranty, the expected cost for an item to the consumer is $(c_s+c_w)-E_{X\ |\ \boldsymbol{\theta}}\left[q(X)\right]$ and to the manufacturer it is $c_m+E_{X\ |\ \boldsymbol{\theta}}\left[q(X)\right]$, where $c_m$ is the cost of supplying an item.
 The cost $q(X)$ under combined FRW-PRW rebate warranty policy (see \cite{thomas1983optimum}) is given by
\begin{align}
q(X)=
\begin{cases}
    (c_s+c_w)&\text{if }0\leq X<w_1\\
    (c_s+c_w)\frac{w_1-X}{w_1-w_2}&\text{if }w_1\leq X<w_2\\
    0 &\text{if }X>w_2.
    \end{cases}
\end{align}
So the expected cost of accepting the lot with warranty is 
\begin{align}\label{theta2}
    E_{X\ |\ \boldsymbol{\theta}}\left\{\mathcal{U}^0_C(\mathcal{A}\ |\ X)\right\}-E_{X\ |\ \boldsymbol{\theta}}\left\{q(X)\right\}+c_w. 
\end{align}
Therefore, the consumer accepts the lot with a warranty and without life-testing if
\begin{align}\label{22}
 & E_{p_C(\boldsymbol{\theta})}\left[E_{X\ |\ \boldsymbol{\theta}}\left\{\mathcal{U}^0_C(\mathcal{A}\ |\ X)\right\}\right]-E_{p_C(\boldsymbol{\theta})}\left[E_{X\ |\ \boldsymbol{\theta}}\left\{q(X)\right\}\right]+c_w\leq a_3.
\end{align} 
We have
\begin{align*}
    E_{X\ |\ \boldsymbol{\theta}}[\mathcal{U}^0_{\mathcal{C}}(\mathcal{A}\ |\ X)]=\frac{a_1}{L}\int_{0}^{L}F_{\boldsymbol{\theta}}(x) dx +a_2.
\end{align*}
From Budhiraja \& Pradhan \cite{budhiraja2019optimum} we get,
\begin{align*}
    E_{X\ |\ \boldsymbol{\theta}}[q(X)]=\frac{c_s+c_w}{w_2-w_1}\int_{w_1}^{w_2}F_{\boldsymbol{\theta}}(x) dx.
\end{align*}
If the inequality (\ref{22}) does not hold, the consumer's decision is rejection with a warranty offer. In the event of rejection of lot with warranty offer, the manufacturer performs a life test. The consumer updates his/her belief based on lifetime information obtained from the life test. Now based on the updated belief, the consumer takes the decision of rejection or acceptance of the lot.  

\subsection{Derivation of consumer's decision}
Suppose $n$ items are put on a life test under Type-I HCS under the design $\boldsymbol{m}=(n,r,T_0)$. The lifetimes of $n$ the items $X_1,\cdots,X_n$ are independent and identically distributed (iid) with common cdf $F_{\boldsymbol{\theta}}$ and pdf $f_{\boldsymbol{\theta}}$. Let $X_{(1)}\leq X_{(2)}\leq \cdots \leq X_{(n)}$ be the order failure times of $n$ items. Let $D$ and $\eta$ be the number of failures and duration of the test, respectively.  The observed data under Type-I HCS is represented by
  \begin{align*}
      \boldsymbol{x}=\begin{cases}
         (d=0)& \text{if } x_{(1)}>T_0 \\(x_{(1)},x_{(2)},\cdots,x_{(r)},r)& \text{if } x_{(r)}<T_0\\
          (x_{(1)},x_{(2)},\cdots,x_{(d)},d) & \text{if } x_{(d)}<T_0<x_{(d+1)}<\cdots<x_{(r)}, d<r,
      \end{cases}
  \end{align*} 
where $x_{(i)}$ and $d$ are the observed values of $X_{(i)}$ and $D$, respectively.
The likelihood function under design $\boldsymbol{m}=(n,r,T_0)$ is given by:
\begin{align*}
    L(\boldsymbol{\theta}\ |\ \boldsymbol{x},\boldsymbol{m})=\begin{cases}
   \left( F_{\boldsymbol{\theta}}(T_0)\right)^{n}&\text{if }d=0\\
    \frac{n!}{(n-d)!}\prod_{i=0}^df_{\boldsymbol{\theta}}(x_{(i)})(1-F_{\boldsymbol{\theta}}(T_0))^{n-d}&\text{if }d=1,\cdots,r-1\\
    \frac{n!}{(n-r)!}\prod_{i=1}^df_{\boldsymbol{\theta}}(x_{(i)})(1-F_{\boldsymbol{\theta}}(x_{(r)}))^{n-r}&\text{if }d=r\\
    \end{cases}
\end{align*} 
Therefore, the posterior distribution of $\boldsymbol{\theta}$ given data $\boldsymbol{x}$ under the design $\boldsymbol{m}=(n,r,T_0)$ is given by
\begin{equation}\label{pe}
p_\mathcal{C}(\boldsymbol{\theta}\ | \ \boldsymbol{x},\boldsymbol{m})= \frac{L(\boldsymbol{\theta}\ | \ \boldsymbol{x},\boldsymbol{m})p_C(\boldsymbol{\theta})}{\int_{\boldsymbol{\theta}} L(\boldsymbol{\theta}\ |\ \boldsymbol{x},\boldsymbol{m})p_C(\boldsymbol{\theta})d\boldsymbol{\theta}} .   
\end{equation}

The consumer accepts the lot after life testing and  without warranty based on the observed data $\boldsymbol{x}$ if
\begin{align}\label{in1}
    \mathcal{U}^0_C(\mathcal{A}\ |\ \boldsymbol{x},\boldsymbol{m})&=\int_\theta \mathcal{U}^0_{\mathcal{C}}(\mathcal{A}\ |\ \boldsymbol{\theta})p_{\mathcal{C}}(\boldsymbol{\theta}\ | \ \boldsymbol{x},\boldsymbol{m})d\boldsymbol{\theta}\nonumber\\
    &\leq \int_\theta \mathcal{U}^0_{\mathcal{C}}(\mathcal{R}\ |\ \boldsymbol{\theta})p_{\mathcal{C}}(\boldsymbol{\theta}\ |\ \boldsymbol{x},\boldsymbol{m})d\boldsymbol{\theta}\nonumber\\
   & =\mathcal{U}^0_{\mathcal{C}}(\mathcal{R}\ |\ \boldsymbol{x},\boldsymbol{m})=a_3.
\end{align}
If the inequality (\ref{in1}) is not satisfied, the manufacturer offers a warranty to the consumer. Now, the consumer will accept the lot after life testing with warranty if
\begin{align}\label{in2}
    \mathcal{U}^0_\mathcal{C}(\mathcal{A}\ |\ \boldsymbol{x},\boldsymbol{m})-E_{\boldsymbol{\theta}\ | \ (\boldsymbol{x},\boldsymbol{m})}\left(E_{X\ |\ \boldsymbol{\theta}}[q(X)]\right)+c_w\leq a_3.
\end{align}
If the inequality is not satisfied after life-testing with a warranty offer, the manufacturer fails to convince the consumer and the consumer rejects the lot.

We now define three sets of data for $\boldsymbol{x}$ values as follows: $\mathcal{X}$ is the set of all possible values of $\boldsymbol{x}$ which satisfies the inequality (\ref{in1}), $\mathcal{Y}$ is the set of all possible values of $\boldsymbol{x}$ which does not satisfy the inequality (\ref{in1}) and satisfies the inequality (\ref{in2}) and $\mathcal{Z}$ is the set of all possible values of $\boldsymbol{x}$ which do not satisfy both inequalities (\ref{in1}) and (\ref{in2}). Therefore $\mathcal{X}$ and $\mathcal{Y}$ contain the values of $\boldsymbol{x}$ for which the consumer accepts the lot without warranty and with warranty, respectively and $\mathcal{Z}$ contains the values of  $\boldsymbol{x}$ for which  consumer reject the lot. The flowchart of the consumer's decision at every steps is given in Figure \ref{f1}.
\begin{figure}[hbt!]

\caption{\large{The procedure of consumer's actions}~~~~~~~~~~~~~~~~~~~~~~~~~~~~~~~}\label{f1}
\vspace{0.1cm}
\begin{center}
\begin{tikzpicture}[node distance=1cm]
\node(s1)[startstop]{Consumer comes to the manufacturer with his/her belief};
    \node (in1) [process,below of=s1,yshift=-0.2cm] {Calculate $E_{p_C(\boldsymbol{\theta})}\left[E_{X\ |\ \boldsymbol{\theta}}\left\{\mathcal{U}^0_C(\mathcal{A}\ |\ X)\right\}\right]$};
    \node (dec1) [decision, below of=in1,yshift=-1.4cm] {Check the inequality (\ref{21})};
    \node (pro2b) [startstop, right of=dec1, xshift=4.2cm,text width=5cm] {Consumer accepts the lot without warranty and without life testing};
    \node (dec3) [process, below of=dec1,yshift=-1.4cm] {Calculate $E_{p_C(\boldsymbol{\theta})}\left[E_{X\ |\ \boldsymbol{\theta}}\left\{q(X)\right\}\right]$};
    \node (dec2) [decision, below of=dec3,yshift=-1.4cm] {Check the inequality (\ref{22})};
    \node (pro6b) [startstop, right of=dec2, xshift=4.2cm,text width=5cm] {Consumer accepts the lot with warranty and without life testing};
    \node (p1) [process, below of=dec2,yshift=-1.3cm] {Start life-testing to update consumer's belief};
    \draw[arrow](s1) --(in1);
    \node (dec4) [decision, below of=p1,yshift=-1.3cm] {Check the inequality (\ref{in1})};
    \node (pro6) [startstop, right of=dec4, xshift=4.2cm,text width=5cm] {Consumer accepts the lot without warranty and with life testing};
    \node (dec5) [decision, below of=dec4,yshift=-2.5cm] {Check the inequality (\ref{in2})};
    \node (pro61) [startstop, right of=dec5, xshift=4.2cm,text width=5cm] {Consumer accepts the lot with warranty and with life testing};
    \node (pro62) [startstop, below of=dec5,yshift=-1.5cm,text width=5cm] {Consumer rejects the lot };
\draw [arrow] (in1) -- (dec1);
\draw [arrow] (dec1) -- node[anchor=north] {yes} (pro2b);
\draw [arrow] (dec1) -- node[anchor=east] {no} (dec3);
\draw [arrow] (dec2) -- node[anchor=north] {yes} (pro6b);
\draw [arrow] (dec3) -- (dec2);
\draw [arrow] (dec2) -- node[anchor=east] {no} (p1);
\draw [arrow] (p1) -- (dec4);
\draw [arrow] (dec4) -- node[anchor=east] {no} (dec5);
\draw [arrow] (dec4) -- node[anchor=north] {yes} (pro6);
\draw [arrow] (dec5) -- node[anchor=east] {no} (pro62);
\draw [arrow] (dec5) -- node[anchor=north] {yes} (pro61);
\end{tikzpicture}
\end{center}
\end{figure}
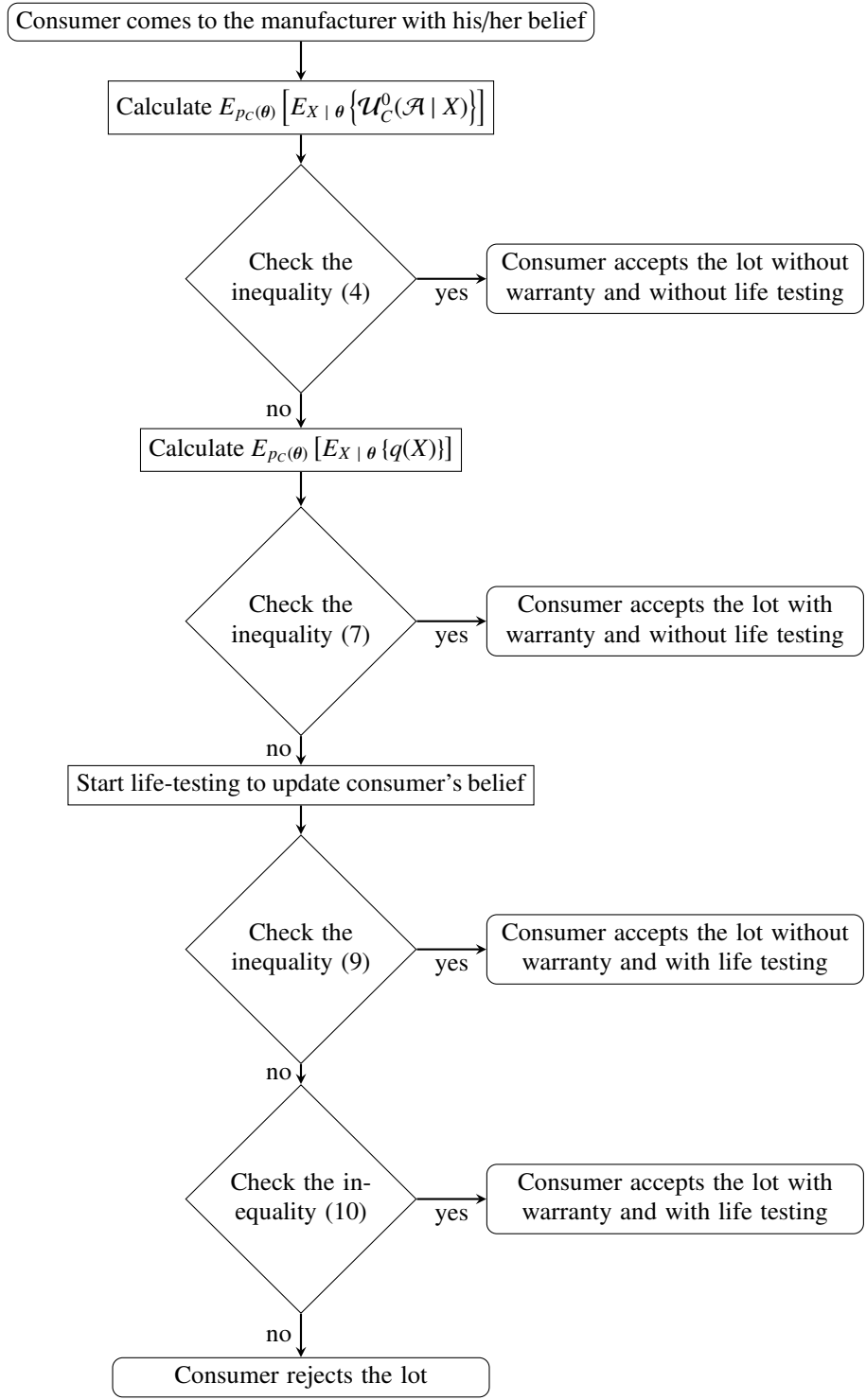

Next, we consider the manufacturer's decision of optimal planning of the life test. The manufacturer needs to determine optimum values of $\boldsymbol{m}=(n,r,T_0)$. The decision is taken based on the manufacturer's utility function
\subsection{Manufacturer's utility function}
We consider the utility functions of the manufacturer corresponding to accepting the lot without warranty and with a warranty after life testing for a given data $\boldsymbol{x}$ are
\begin{align}\label{acc}
   \mathcal{U}_\mathcal{M}(\mathcal{A}_{wo}\ |\ \boldsymbol{x}, \boldsymbol{\theta})=b_1E_{X\ | \ (\boldsymbol{x }, \boldsymbol{\theta})}[X^q]-b_2-b_4n-b_5d-b_6\eta,
   \end{align}
   and
   \begin{align}\label{acc1}
   \mathcal{U}_\mathcal{M}(\mathcal{A}_{w}\ |\ \boldsymbol{x}, \boldsymbol{\theta})=b_1E_{X\ | \ (\boldsymbol{x }, \boldsymbol{\theta})}[X^q]+c_w-b_2-E_{X\ | \ (\boldsymbol{x }, \boldsymbol{\theta})}[q(X)]-b_4n-b_5d-b_6\eta,
   \end{align}
respectively, where $b_1>0$ represents the profit to the manufacturer per unit time that item functions, $q$ is the risk parameter (see \cite{happich2001utility}), and $b_2$ represents the fixed cost when the lot is accepted. The utility corresponding to rejection is taken as 
   \begin{align}\label{rej}
    \mathcal{U}_\mathcal{M}(\mathcal{R}\ |\ \boldsymbol{x}, \boldsymbol{\theta})=b_3-b_4n-b_5d-b_6\eta,
\end{align}
where $b_3$ is the cost of rejecting the lot, $b_4$ is the cost of putting an item on test, $b_5$ is the cost per failed item, $b_6$ is the cost of running the test per unit time,
 $\eta$ is the duration of the test
and $D$ is the number of failures up to time $\eta$. Note that
\begin{align*}
    E_{X\ | \ (\boldsymbol{x }, \boldsymbol{\theta})}[X^q]=\int_0^\infty x^q f_{(\boldsymbol{x}, \boldsymbol{\theta})}(x)dx= \int_0^\infty x^q f_{\boldsymbol{\theta}}(x)dx
\end{align*}

 Using the consumer's decision after life-testing, we find the manufacturer's expected utility to find the optimum life-testing plan.

\subsection{Derivation of manufacturer's Decision} \label{deci}
The manufacturer's utility function for the given data $\boldsymbol{x}$ is given by
    \begin{align*}
    \psi_\mathcal{M}(\boldsymbol{m}\ |\  \boldsymbol{x},\boldsymbol{\theta})=&I_{\mathcal{X}}(\boldsymbol{x})\left[b_1E_{X\ |\ (\boldsymbol{x},\boldsymbol{\theta})}(X^q)-b_2-b_4n-b_5d-b_6\eta\right]+I_{\mathcal{Y}}(\boldsymbol{x})\left[b_1E_{X|(\boldsymbol{x},\boldsymbol{\theta})}(X^q)-b_2-b_4n-b_5d-b_6\eta\right]\\
    &+I_{\mathcal{Z}}(\boldsymbol{x})\left[b_3-b_4n-b_5d-b_6\eta\right],
    \end{align*}
    where $I_{\mathcal{S}}(\boldsymbol{s})$ denotes the indicator function which is defined as
    \begin{align*}
    I_{\mathcal{S}}(\boldsymbol{s})=
        \begin{cases}
           1 & s\in \mathcal{S}\\
           0 & s\notin \mathcal{S}.
        \end{cases}
    \end{align*}
     The expected utility with respect to data $\boldsymbol{x}$ and $\boldsymbol{\theta}$ is given by
    \begin{align}\label{manu1}
    \psi_\mathcal{M}(\boldsymbol{m})
    =&E_{\boldsymbol{\theta}}[E_{\boldsymbol{x}\ |\ \boldsymbol{\theta}}[\psi_\mathcal{M}(\boldsymbol{m}\ |\  \boldsymbol{x},\boldsymbol{\theta})]]\nonumber\\
    =&\int_{\boldsymbol{\theta}}\int_{\boldsymbol{y}}\sum_{d}\left[I_{\mathcal{X}\cup\mathcal{Y}}(\boldsymbol{x})\left\{b_1E_{X|(\boldsymbol{x},\boldsymbol{\theta})}(X^q)-b_2\right\}-I_{\mathcal{Y}}(\boldsymbol{x})E_{X|(\boldsymbol{x},\boldsymbol{\theta})}
    (q(X))+I_{\mathcal{Z}}(\boldsymbol{x})b_3\right]f(\boldsymbol{y},d\ | \ \boldsymbol{\theta})\nonumber\\
    &-b_4n-b_5E[D\ |\ \boldsymbol{\theta}]-b_6E[\eta\ |\ \boldsymbol{\theta}]p_\mathcal{M}(\boldsymbol{\theta})~d\boldsymbol{y}~d\boldsymbol{\theta},
    \end{align}
    where $\boldsymbol{x}=(\boldsymbol{y},d)$, $\boldsymbol{y}=(x_{(1)},\ldots,x_{(d)})$, $f(\boldsymbol{y},d\ | \ \boldsymbol{\theta})$ be the distribution of $\boldsymbol{x}$ and $p_\mathcal{M}(\boldsymbol{\theta})$ is the prior distribution of $\boldsymbol{\theta}$ for the manufacturer.\\
Let $A_{wo}$ and $A_w$ be the events that the consumer accepts the lot without warranty and with warranty, respectively, and $R$ be the event that the consumer rejects the lot.
    Then,
    \begin{align}
P(A_{wo}\ | \ \theta)&=\int_{\boldsymbol{y}}\sum_{d}I_{\mathcal{X}}(x)f(\boldsymbol{y},d\ | \ \boldsymbol{\theta})~d\boldsymbol{y},
    \end{align}
    \begin{align}
P(A_{w}\ | \ \theta)&=\int_{\boldsymbol{y}}\sum_{d}I_{\mathcal{Y}}(x)f(\boldsymbol{y},d\ | \ \boldsymbol{\theta})~d\boldsymbol{y}
    \end{align}
    and
    \begin{align}
P(R\ | \ \boldsymbol{\theta})&=\int_{\boldsymbol{y}}\sum_{d}I_{\mathcal{Z}}(x)f(\boldsymbol{y},d\ | \ \boldsymbol{\theta})~d\boldsymbol{y}.
    \end{align}
Let $L_w$ be the expected loss of the manufacturer due to the warranty policy. Then $L_w(\boldsymbol{\theta})$ can be written as
 \begin{align}
     L_w(\boldsymbol{\theta})=\int_{\boldsymbol{y}}\sum_{d}\left[E_{X\ |\ (\boldsymbol{x},\boldsymbol{\theta})}(q(X))-c_w\right]I_{\mathcal{Y}}(x)f(\boldsymbol{y},d\ | \ \boldsymbol{\theta})~d\boldsymbol{y}.
 \end{align}
 The expected number of failures is given by (see
 \cite{bhattacharya2014optimum})
\begin{align*}
    E[D\ | \ \boldsymbol{\theta}]=\sum_{j=0}^{r-1}j\binom{n}{j}(1-F_{\boldsymbol{\theta}}(T_0))^j(F_{\boldsymbol{\theta}}(T_0))^{n-j}+r\sum_{j=r}^n\binom{n}{j}(1-F_{\boldsymbol{\theta}}(T_0))^j(F_{\boldsymbol{\theta}}(T_0))^{n-j}.
\end{align*}
The expected duration of the test is given by (see
 \cite{bhattacharya2014optimum})
\begin{align*}
    E(\eta\ | \ \boldsymbol{\theta})=&T_0\left(1-\sum_{j=r}^n\binom{n}{j}(F_{\boldsymbol{\theta}}(T_0)^j(1-F_{\boldsymbol{\theta}}(T_0))^{n-j}\right)+r\binom{n}{r}\int_0^{T_0}x(F_{\boldsymbol{\theta}}(x))^{r-1}(1-F_{\boldsymbol{\theta}}(x))^{n-r}f_{\boldsymbol{\theta}}(x)dx.
\end{align*}
 The expected utility of the manufacturer can be written as
 \begin{align}
     \psi_\mathcal{M}(\boldsymbol{m})
=&\int_{\boldsymbol{\theta}}\sum_{\boldsymbol{x}}\left[b_1E_{X\ |\ (\boldsymbol{x},\boldsymbol{\theta})}(X^q)[P(A_{wo}\ |\ \boldsymbol{\theta})+P(A_w\ |\ \boldsymbol{\theta})\right]p_M(\boldsymbol{\theta})d\boldsymbol{\theta}\nonumber\\
&-b_2[P(A_{wo})+P(A_w)]-L_w+b_3P(R)-b_4n-b_5E(D)-b_6E(\eta),
 \end{align}
\resizebox{\textwidth}{!}{ where $P({A}_{wo})=\int_{\boldsymbol{\theta}}P({A}_{wo}\ |\ \boldsymbol{\theta})p_\mathcal{M}(\boldsymbol{\theta})d\boldsymbol{\theta}$, $P({A}_{ww})=\int_{\boldsymbol{\theta}}P({A}_{ww}\ |\ \boldsymbol{\theta})p_\mathcal{M}(\boldsymbol{\theta})d\boldsymbol{\theta}$, $P({R})=\int_{\boldsymbol{\theta}}P({R}\ |\ \boldsymbol{\theta})p_\mathcal{M}(\boldsymbol{\theta})d\boldsymbol{\theta}$, $L_w=\int_{\boldsymbol{\theta}}L_w(\boldsymbol{\theta})p_\mathcal{M}(\boldsymbol{\theta})d\boldsymbol{\theta}$ } $E[{D}]=\int_{\boldsymbol{\theta}}E({D}\ |\ \boldsymbol{\theta})p_\mathcal{M}(\boldsymbol{\theta})d\boldsymbol{\theta}$ and $E[{\eta}]=\int_{\boldsymbol{\theta}}E({\eta}\ |\ \boldsymbol{\theta})p_\mathcal{M}(\boldsymbol{\theta})d\boldsymbol{\theta}.$
 \vspace{0.2cm}
 
    The optimal values of $\boldsymbol{m}$ is given by
    \begin{align*} \boldsymbol{m}^*=(n^*,r^*,T_0^*)=\arg\max\limits_{\boldsymbol{m}}\psi_{\mathcal{M}}(\boldsymbol{m}).
    \end{align*}
\textbf{Note :}
\begin{enumerate}
    \item The manufacturer's utility of accepting the lot without warranty and life-testing is 
    \begin{align*}
        \psi_{\mathcal{M}}((0,0,0))=\int_{\boldsymbol{\theta}}b_1E_{X\ |\ \boldsymbol{\theta}}(X^q)p_{\mathcal{M}}(\boldsymbol{\theta})~d\boldsymbol{\theta}-b_2.
    \end{align*}
     \item The manufacturer's utility of accepting the lot with warranty and without life-testing is 
    \begin{align*}
        \psi_{\mathcal{M}}((0,0,0))=\int_{\boldsymbol{\theta}}b_1E_{X\ |\ \boldsymbol{\theta}}\{X^q-q(X)\}p_{\mathcal{M}}(\boldsymbol{\theta})~d\boldsymbol{\theta}-b_2+c_w.
        \end{align*}
         \item The manufacturer's utility of rejecting the lot without life-testing is 
    \begin{align*}
        \psi_{\mathcal{M}}((0,0,0))=b_3.
        \end{align*}
        \item After life testing the manufacturer's utility must be greater than the above three values. The conditions under which the lot is accepted without life testing are given in (\ref{in1}) and (\ref{in2}). If the manufacturer's utility is less than $b_3$ among all sampling plans, then the lot is rejected without life testing.
\end{enumerate}

\section{Optimal RASP }\label{we}
Here we obtain optimal RASP for exponential and Weibull distributions. 
\subsection{Exponential distribution case}\label{exp}
We obtain optimal RASP when lifetime distribution follows exponential distribution. The lifetimes of the items be i.i.d with pdf $f_\theta(t)=\frac{1}{\theta}e^{-\frac{t}{\theta}}$, $\theta>0$, $t>0$, denoted by $\text{Exp}(\theta)$.
\subsubsection{Consumer's decision}\label{cd}
 It is assumed that $\theta$ follows inverse gamma distribution  with pdf 
\begin{align*}
    p_C(\theta)=\frac{\beta_1^{\alpha_1}}{\Gamma(\alpha_1)}\theta^{-\alpha_1-1}\exp\left(-\frac{\beta_1}{\theta}\right), ~~\theta>0, ~\alpha_1>0,~\beta_1>0.
\end{align*}
For the Exp($\theta$), the UFs corresponding to acceptance and rejection of the consumer without warranty are given by
 $$\mathcal{U}^0_C(\mathcal{A}\ | \ \theta)=E_{X\ |\ \theta}\left[\mathcal{U}_{\mathcal{C}}^0\left(\mathcal{A}\ |\ X\right)\right]=\frac{a_1}{L}\left[L+\theta\exp(-\theta/L)-\theta\right]+a_2$$ and
  $$\mathcal{U}^0_C(\mathcal{R}\ | \ \theta)=E_{X\ |\ \theta}\left[\mathcal{U}_{\mathcal{C}}^0\left(\mathcal{R}\ |\ X\right)\right]=a_3,$$
respectively. Using (\ref{21}), the consumer accepts the lot without life-testing and without warranty if
\begin{align}\label{i1}
    \frac{a_1}{L}\left[L+\frac{\beta_1^{\alpha_1}}{(\alpha_1-1)}\left(\frac{1}{(\beta_1+L)}\right)^{\alpha_1-1}-\frac{\beta_1}{\alpha_1-1}\right]+a_2\leq a_3.
\end{align}
If the above inequality is not satisfied, the manufacturer gives a warranty offer to the consumer. The expected warranty cost is obtained as 
\begin{align*}
    E_{X\ | \ \theta}[q(X)]=\frac{cs+cw}{w_2-w_1}\left[(w_2-w_1)+\theta\left\{\exp(-w_2/\theta)-\exp(w_1/\theta)\right\}\right].
\end{align*}
Taking expectation with respect to $\theta$, we get
\begin{align*}
    E_{\theta}[E_{X\ | \ \theta}[q(X)]]=(c_s+c_w)-\frac{c_s+c_w}{w_2-w_1}\frac{\beta_1^{\alpha_1}}{(\alpha_1-1)}\left[\frac{1}{(\beta_1+w_1)^{\alpha_1-1}}-\frac{1}{(\beta_1+w_2)^{\alpha_1-1}}\right].
\end{align*}
Using (\ref{22}), the consumer accepts the lot without life-testing and with a warranty, if
\begin{align}\label{i2}
    \frac{a_1}{L}\left[L+\frac{\beta_1^{\alpha_1}}{(\alpha_1-1)}\left(\frac{1}{(\beta_1+L)}\right)^{\alpha_1-1}-\frac{\beta_1}{\alpha_1-1}\right]+a_2+\frac{c_s+c_w}{w_2-w_1}\frac{\beta_1^{\alpha_1}}{(\alpha_1-1)}\left[\frac{1}{(\beta_1+w_1)^{\alpha_1-1}}-\frac{1}{(\beta_1+w_2)^{\alpha_1-1}}\right]\geq {a_3+c_s}.
\end{align}
If both inequalities (\ref{i1}) and (\ref{i2}) are not satisfied, the manufacturer conducts a life test and provides data to the consumer. Suppose $n$ items are put on test under type-I HCS. 
The likelihood function is given by
\begin{align}\label{e1}
    L(\theta\ |\ \boldsymbol{x},\boldsymbol{m})\propto \theta^{-d}\exp\left(-\frac{v(\boldsymbol{x})}{\theta}\right),
\end{align}
 where \begin{align}
    v(\boldsymbol{x})=\begin{cases}
    nT_0& d=0\\
     \sum_{i=1}^d x_{(i)}+(n-d)T_0 &1\leq d<r\\
   \sum_{i=1}^r x_{(i)}+(n-r)x_{(r)} &d=r.
    \end{cases}
\end{align}
The upper bound of $v(\boldsymbol{x})$ is $nT_0$ for all $d$.
The posterior distribution of $\theta$ is given by
\begin{align*}
    p(\theta\ |\ \boldsymbol{x},\boldsymbol{m})\propto \theta^{-d-\alpha_1-1}\exp\left(-\frac{v(\boldsymbol{x})+\beta_1}{\theta}\right).
\end{align*}
This shows that $\theta$ follows inverse gamma distribution with parameter
$(\alpha_1+d,\beta_1+v(\boldsymbol{x}))$. Using (\ref{in1}),  we get,
\begin{align}\label{w1}
    \mathcal{U}^0_C(\mathcal{A}\ |\ \boldsymbol{x},\boldsymbol{m})&=\int_{0}^\infty \left[\frac{a_1}{L}\left[L+\theta\exp(-\theta/L)-\theta\right]+a_2 \right]\frac{(v(\boldsymbol{x})+\beta_1)^{\alpha_1+d}}{\Gamma(\alpha_1+d)}\theta^{-\alpha_1+d-1}\exp\left(-\frac{v(\boldsymbol{x})+\beta_1}{\theta}\right)d\theta\leq a_3\nonumber\\
    &\implies\frac{a_1}{L}\left[L+\frac{(\beta_1+v(\boldsymbol{x}))^{\alpha_1+d}}{(\alpha_1+d-1)}\left(\frac{1}{(\beta_1+v(\boldsymbol{x})+L)}\right)^{\alpha_1+d-1}-\frac{\beta_1+v(\boldsymbol{x})}{\alpha_1+d-1}\right]+a_2\leq a_3\nonumber\\
    &\implies A_1((v(\boldsymbol{x}),d)\ |\ \boldsymbol{m})\leq \frac{L(-a_2+a_3)}{a_1},
\end{align}
where 
\begin{align*}
  A_1((v(\boldsymbol{x}),d)\ |\ \boldsymbol{m})=   \left[L+\frac{(\beta_1+v(\boldsymbol{x}))^{\alpha_1+d}}{(\alpha_1+d-1)}\left(\frac{1}{(\beta_1+v(\boldsymbol{x})+L)}\right)^{\alpha_1+d-1}-\frac{\beta_1+v(\boldsymbol{x})}{\alpha_1+d-1}\right].
\end{align*}
If the inequality (\ref{w1}) is satisfied, the consumer accepts the lot without warranty. If the inequality (\ref{w1}) is not satisfied, the manufacturer gives a warranty offer to the consumer. Now, the consumer will accept the lot with a warranty, using (\ref{in2}), if
\begin{align}\label{w2}
   &\int_{0}^\infty\left[\mathcal{U}^0_C(\mathcal{A}\ |\ \theta)-\frac{c_s+c_w}{w_2-w_1}\int_{w_1}^{w_2}[1-\exp(-x/\theta)]dx-c_w\right]\frac{(v(\boldsymbol{x})+\beta_1)^{\alpha_1+d}}{\Gamma(\alpha_1+d)}\theta^{-\alpha_1+d-1}\exp\left(-\frac{v(\boldsymbol{x})+\beta_1}{\theta}\right)d\theta\leq a_3\nonumber\\
    &\implies \frac{a_1}{L}A_1((v(\boldsymbol{x}),d)\ |\ \boldsymbol{m})-A_2((v(\boldsymbol{x}),d)\ |\ \boldsymbol{m})\leq {-a_2+a_3},
\end{align}
where
\begin{align*}
    A_2((v(\boldsymbol{x}),d)\ |\ \boldsymbol{m})=c_s-\frac{c_s+c_w}{w_2-w_1}\frac{(\beta_1+v(\boldsymbol{x}))^{\alpha_1+d}}{(\alpha_1+d-1)}\left[\frac{1}{(\beta_1+w_1+v(\boldsymbol{x}))^{\alpha_1+d-1}}-\frac{1}{(\beta_1+w_2+v(\boldsymbol{x}))^{\alpha_1+d-1}}\right].
\end{align*}

Next, we provide an alternative form of consumer´s decision function. The alternative form of the decision function is useful to calculate manufacturer´s utility. For developing an alternative form of the consumer's decision function, we consider the following two results.
\begin{result}\label{r1}$A_1((v(\boldsymbol{x}),d)\ |\ \boldsymbol{m})$ is decreasing in $v(\boldsymbol{x})$ for fixed $d$.

\end{result}
\begin{proof}
We present the proof in the appendix.
\end{proof}
\begin{result}\label{r2}$\frac{a_1}{L}A_1((v(\boldsymbol{x}),d)\ |\ \boldsymbol{m})-A_2((v(\boldsymbol{x}),d)\ |\ \boldsymbol{m})$ is decreasing in $v(\boldsymbol{x})$ for fixed $d$

\end{result}
\begin{proof}
We provide the proof in the appendix.
\end{proof}
\vspace{0.5cm}\\
For fixed $d$, $A_1(v(\boldsymbol{x}),d\ | \ \boldsymbol{m})$ is decreasing in $v(\boldsymbol{x})$, there exist a point $c(d)$ such that  
\begin{align*}
    A_1((v(\boldsymbol{x}),d)\ | \ \boldsymbol{m})< \frac{L(a_3-a_2)}{a_1}, &~~~~~ \text{for } v(\boldsymbol{x})>c(d)\\
       A_1((v(\boldsymbol{x}),d)\ | \ \boldsymbol{m})> \frac{L(a_3-a_2)}{a_1}, &~~~~~ \text{for } v(\boldsymbol{x})<c(d).
\end{align*}
For fixed $d$, $\frac{a_1}{L}A_1((v(\boldsymbol{x}),d)\ |\ \boldsymbol{m})-A_2((v(\boldsymbol{x}),d)\ |\ \boldsymbol{m})$ is decreasing in $v(\boldsymbol{x})$, there exist a point $c'(d)$ such that  
\begin{align*}
 \frac{a_1}{L}A_1((v(\boldsymbol{x}),d)\ | \ \boldsymbol{m})-A_2((v(\boldsymbol{x}),d)\ | \ \boldsymbol{m})\leq (a_3-a_2), &~~~~~ \text{for } v(\boldsymbol{x})>c'(d)\\
      \frac{a_1}{L}A_1((v(\boldsymbol{x}),d)\ | \ \boldsymbol{m})-A_2((v(\boldsymbol{x}),d)\ | \ \boldsymbol{m})\leq (a_3-a_2), &~~~~~ \text{for } v(\boldsymbol{x})<c'(d).
\end{align*}
Note that $0\leq v(\boldsymbol{x})\leq nT_0$, we have $\mathcal{X}=\left\{(v(\boldsymbol{x}),d):~~v(\boldsymbol{x})>c(d)~~\&  ~~0<v(\boldsymbol{x})<nT_0 ~~\&~~ 0\leq d \leq r\right\}$, $\mathcal{Y}=\{(v(\boldsymbol{x}),d): ~~v(\boldsymbol{x})<c(d)~~\&~~v(\boldsymbol{x})>c'(d)~~\&~~0<v(\boldsymbol{x})<nT_0 ~~\& ~~0\leq d \leq r\}$ and $\mathcal{Z}=\{(v(\boldsymbol{x}),d): ~~v(\boldsymbol{x})>c'(d)~~\&~~0<v(\boldsymbol{x})<nT_0 ~~\& ~~0\leq d \leq r\}.$

Let $c_1(d)=\min\{\max\{0,c(d)\},nT_0\}$ and $c_2(d)=\min\{\max\{0,c'(d)\},c(d)\}$. Then $\mathcal{X}$, $\mathcal{Y}$ and $\mathcal{Z}$ can be written as
$\mathcal{X}=\left\{(v(\boldsymbol{x}),d):~~ ~~c_1(d)<v(\boldsymbol{x})<nT_0 ~~\&~~ 0\leq d \leq r\right\}$, $\mathcal{Y}=\{(v(\boldsymbol{x}),d): ~~c_2(d)<v(\boldsymbol{x})<c_1(d)~~\& ~~0\leq d \leq r\}$ and $\mathcal{Z}=\{(v(\boldsymbol{x}),d): ~~0<v(\boldsymbol{x})<c_2(d) ~~\& ~~0\leq d \leq r\}.$
\subsubsection{Manufacturer's decision}\label{md}
Using equations (\ref{acc}), (\ref{acc1}) and (\ref{rej}), the expected utility functions of the manufacturer corresponding to the decisions $\mathcal{A}_{wo}$, $\mathcal{A}_w$ and $\mathcal{R}$ are obtained as  
\begin{align*}
&\mathcal{U}_{\mathcal{M}}(\mathcal{A}_{Wo}\ |\ \boldsymbol{x},\theta)=b_1{\theta^{q}}\Gamma(q+1)-b_2-b_4n-b_5d-b_6\eta ,\\
&\mathcal{U}_{\mathcal{M}}(\mathcal{A}_w\ |\ \boldsymbol{x},\theta)=b_1{\theta^{q}}\Gamma(q+1)-b_2+c_w-c_s+\frac{c_s\theta}{w_2-w_1}\left[\exp(-w_1/\theta)-\exp(-w_2/\theta\right]-b_4n-b_5d-b_6\eta\\
 \text{and }~~\\ &\mathcal{U}_{\mathcal{M}}(\mathcal{R}\ | \ \boldsymbol{x},\theta)=b_3-b_4n-b_5d-b_6\eta,
\end{align*}respectively.

It is assumed that $\theta$ follows an inverse gamma distribution with parameters $(\alpha_2,\beta_2)$. Let \begin{align*}
    V(\boldsymbol{X})=\begin{cases}
    nT_0& D=0\\
     \sum_{i=1}^D X_{(i)}+(n-D)T_0 &1\leq D<r\\
   \sum_{i=1}^r X_{(i)}+(n-r)X_{(r)} &D=r.
    \end{cases}
\end{align*} Note that $v(\boldsymbol{x})$ is the observed value of $V(\boldsymbol{X})$.
Using the sets $\mathcal{X},\mathcal{Y}$ and $\mathcal{Z}$, the manufacturer's utility which is given in (\ref{manu1}), can be written as
\begin{align}\label{man}
    \psi_{\mathcal{M}}(\boldsymbol{m})=&-b_4n+b_3P(R)-b_5E[D]-b_6E[\eta]-b_2[P(A_{wo})+P(A_w)]-L_w\nonumber\\
&+b_1\Gamma(q+1)\sum_{d=0}^r\int_0^\infty\int_{c_2(d)}^{nT_0}\theta^qf_{(V(\boldsymbol{X}),D)}(y,d)p_{\mathcal{M}}(\theta)~d\theta~ dy,
\end{align}
where 
\begin{align}
   & P(A_{wo})=\sum_{d=0}^r\int_0^\infty\int_{c_1(d)}^{nT_0} f_{(V(\boldsymbol{X}),D)}(y,d)p_{\mathcal{M}}(\theta)~d\theta ~dy,\label{m1}\\
   & P(A_{w})=\sum_{d=0}^r\int_0^\infty\int_{c_2(d)}^{c_1(d)} f_{(V(\boldsymbol{X}),D)}(y,d)p_{\mathcal{M}}(\theta)~d\theta~ dy,\label{m2}\\
  & P(R)=\sum_{d=0}^r\int_0^\infty\int_{0}^{c_2(d)} f_{(V(\boldsymbol{X}),D)}(y,d)p_{\mathcal{M}}(\theta)d\theta dy,\label{m3}\\
&L_w= \sum_{d=0}^r\int_0^\infty\int_{c_2(d)}^{c_1d)}\left(c_s-\frac{c_s\theta}{w_2-w_1}\left[\exp(-w_1/\theta)-\exp(-w_2/\theta\right]-c_w\right)f_{(V(\boldsymbol{X}),D)}(y,d)p_{\mathcal{M}}(\theta)~d\theta~ dy.\label{m4}
\end{align}
and $f_{(V(\boldsymbol{X}),D)}$ is the joint distribution of $V(\boldsymbol{X})$ and $D$.
\bt\label{the1} \cite{childs2003exact}
The joint distribution of $V(\boldsymbol{X})$ and D under type-I HCS  is given by
\begin{align*}
f_{(V(\boldsymbol{X}),D)}(y,d)=\begin{cases}
\exp\left(-nT_0/\theta\right) h(y) &\text{when }d=0 \\
\binom{n}{d} \sum_{i=0}^d\binom{d}{i} \exp\left[-(n-d+i)T_0/\theta\right](-1)^i g(y-T_0(n-d+i),\frac{1}{\theta},d)&\text{when } 1
\leq d\leq r-1\\
g(y,\frac{1}{\theta},r)+r\binom{n}{r}\sum_{k=1}^r\frac{(-1)^{k}\exp\left[-(n-r+k)T_0/\theta\right]}{n-r+k}\binom{r-1}{k-1}g(y-(n-r+k)T_0,\frac{1}{\theta},r)&\text{when }d=r~~~~~~
\end{cases}
\end{align*}
where the function $g(y,\alpha,\beta)$ is the pdf of the gamma distribution with parameters $(\alpha,\beta)$ given by
\begin{align*}
    g(y,\alpha,\beta)=\begin{cases}
    \frac{\beta^{\alpha}}{\Gamma(\alpha)}y^{\alpha-1}\exp(-\beta y)& \text{for } y>0\\
    0 & \text{otherwise},
    \end{cases}
\end{align*}
and $h(y)$ is a degenerate distribution at the point $nT_0$ which is defined as \begin{align}
    h(y)=\begin{cases}
    1& \text{for } y=nT_0\\
    0 & \text{otherwise}.
    \end{cases}
\end{align}
\et

Now consider the result which is required for computation of manufacturer's utility.
\begin{result}\label{g1}
\begin{align*}
\int_0^\infty \int_{x_1}^{x_2} \theta^{-b-1} \exp(-(a+c)/\theta)g(y-c,1/\theta,p) ~dy~ d\theta=\frac{\Gamma(b)}{(a+c)^{b}}\left[I_{\eta_2}(p,b)-I_{\eta_1}(p,b)\right]
\end{align*}
where $B_\eta(p,b)$ is the incomplete beta function, $I_\eta(p,b)$ is the cdf of beta function, $\eta_{i-1}=(x_i-c)/(x_i+a)$, $i=2,3$ and $x_3=\max(x_1,c)$. $B_\eta(p,b)$ and $I_\eta(p,b)$ are defined as
\begin{align*}
    B_\eta(p,b)=\int_0^\eta x^{mp1}(1-x)^{b-1} dy
\end{align*}
and 
\begin{align*}
   I_\eta(p,b)=\frac{B_\eta(p,b)}{B(p,b)} 
\end{align*}
respectively.
\end{result}
\begin{proof}
  We present the proof in the appendix.
\end{proof}
\vspace{0.2cm}

Using Result \ref{g1} we get,
\begin{align*}
    &\int_0^\infty \int_{x_1}^{x_2}\theta^{(-(\alpha_2-l)-1)}\exp(-(\beta_2+w+(n+j-d)T_0)/\theta) g\left(x-T_0(n-d+j),\frac{1}{\theta},d\right)~ dy~d\theta\\
    &=\frac{\Gamma(\alpha_2-l)}{(\beta_2+w+(n+i-d)T)^{\alpha_2-l}}(I_{s_2}(d,\alpha_2-q)-I_{s_1}(d,\alpha_2-q))=H_{(w,l,j,d)}(s_1,s_2), say.
\end{align*}
 Also let, $h_1=\min\{\max\{x_1,(n-d+j)T_0\},x_2\}$, $h_2=\max\{\max\{x_1,(n-d+j)T_0\},x_2\}$, and $s_i=(h_i-(n-d+j)T_0)/(h_i+\beta_2+w)$, for $i=1,2$. Note that the points $s_1$ and $s_2$ varies with $w,j$ and $d$. 

\begin{result}\label{t1}
    Suppose the lifetime of the product follows $\text{Exp}(\theta)$ and $\theta$ follows inverse gamma with parameters $(\alpha_2,\beta_2)$. If   $n$ items are put on test  and test is carried out under type-I HCS, then the manufacturer's expected utility function is given by
    \small\begin{align*}
    \psi_{\mathcal{M}}(\boldsymbol{m})&=b_1\Gamma(q+1)\left(\frac{\beta_2^{\alpha_2}\Gamma(\alpha_2-q)}{\Gamma(\alpha_2)(\beta_2+nt)^{\alpha_2-q}}I(nT>c'(0))+\sum_{i=0}^d\binom{d}{i}\binom{n}{d}(-1)^i\frac{\beta_2^{\alpha_2}}{\Gamma(\alpha_2)} H_{(0,q,i,d)}{(\zeta_2,\zeta_4)}+\frac{\beta_2^{\alpha_2}}{\Gamma(\alpha_2)}H_{(0,q,r-n,r)}{(\zeta_2,\zeta_4)}\right.\\
    &\left.+\frac{\beta_2^{\alpha_2}}{\Gamma(\alpha_2)}r\binom{n}{r}\sum_{k=1}^r\frac{(-1)^{k}}{n-r+k}\binom{r-1}{k-1}H_{(0,q,k,r)}{(\zeta_2,\zeta_2)}\right)-b_2[P(A_{wo})+P(A_w)]-L_w+b_3P(R)-b_4n-b_5E[D]-b_6E[\xi],
\end{align*}
\normalsize
where $\alpha_2>q$, $\eta_1=(n-d+i)T_0$, $\eta_2=\max\{c_2(d),(n-d+i)T_0\}$, $\eta_3=\max\{c_1(d),(n-d+i)T_0\}$,   and $\eta_4=nT_0$ and 
$\zeta_i=(\eta_i-(n+i-d)T_0)/(\eta_i+\beta_2+w)$ for $i=1,\cdots,4$.
\end{result}
The expressions of $P(A_{wo})$, $P(A_w)$, $P(R)$, $L_w$, $E[D]$, $E[\xi]$ are given in the appendix.

\subsection{Weibull Distribution case}\label{wei}
Here we derive the optimal sampling plan for the Weibull distribution. Let $X$ follows Weibull distribution with cdf
\begin{align*}
    F(x|\alpha,\lambda)=1-\exp(-\lambda x^\alpha), ~~~x>0, \alpha,\lambda>0
\end{align*}
and the priors of $\alpha$ and $\lambda$ follow gamma distributions with parameters $(u_1,v_1)$ and $(c_1,d_1)$ respectively. Here $\boldsymbol{\theta}=(\alpha,\lambda)$.
The expected utility functions of the consumer corresponding to acceptance and rejection are
\begin{align*}
    \mathcal{U}^0_\mathcal{C}(\mathcal{A}\ |\ \boldsymbol{\theta})=a_1\left[1-\frac{\gamma\left(\frac{1}{\alpha},\lambda L^\alpha\right)}{\alpha\lambda^{\frac{1}{\alpha}}L}\right]+a_2
\end{align*}
and
\begin{align*}
    \mathcal{U}^0_\mathcal{C}(\mathcal{R}\ |\ \boldsymbol{\theta})=a_3,
\end{align*}
respectively, where $\gamma(s,t)$ is the lower incomplete gamma function given by $\gamma(s,t)=\int_0^t u^{s-1} e^{-u}du$. The expected acceptance utility functions of the consumer with warranty are
$\mathcal{U}^0_\mathcal{C}(\mathcal{A}\ |\ \boldsymbol{\theta})-E_{X\ | \ \boldsymbol{\theta}}(q(X))$, where 
 \begin{align}
 E_{X\ | \ \boldsymbol{\theta}}(q(X))=&\frac{c_s+c_w}{w_2-w_1}\int_{w_2}^{w_1}[1-\exp(-\lambda x^\alpha]dx\nonumber\\
    &=(c_s+c_w)\left[1-\frac{1}{\alpha\lambda^{\frac{1}{\alpha}}(w_2-w_1)}\left\{\gamma\left(\frac{1}{\alpha},w_2^*\right)-\gamma\left(\frac{1}{\alpha},w_1^*\right)\right\}\right]=Q(\boldsymbol{\theta}),~~\text{say}
\end{align}
where $w_i^*=\lambda w_i^\alpha$ for $i=1,2$.

If the inequalities (\ref{21}) and (\ref{22}) do not hold, we go for life testing under Type-I HCS. The joint posterior distribution of $\boldsymbol{\theta}=(\alpha,\lambda)$ is given by
\begin{align}\label{post1}
    p(\boldsymbol{\theta}\ |\ \boldsymbol{x})\propto \lambda^{c_1+d-1}\alpha^{u_1+d-1}\prod_{i=0}^d x_{(i)}^{\alpha-1}\exp\left[-\lambda( v(\boldsymbol{x})+d_1)\right]\exp(-v_1\alpha),
\end{align}
where $x_{(0)}=1$ and 
\begin{align}
    v(\boldsymbol{x})=\begin{cases}
    nt^\alpha& d=0\\
     \sum_{i=1}^d x_{(i)}^\alpha+(n-d)t^\alpha &1\leq d<r\\
   \sum_{i=1}^r x_{(i)}^\alpha+(n-r)x_{(r)}^\alpha &d=r.
    \end{cases}
\end{align}
The consumer's decision cannot be obtained analytically as in the case of exponential distribution. We calculate (\ref{in1}) and (\ref{in2}) by simulation, for which we draw the sample from the joint posterior distribution given in (\ref{post1}). The procedure for drawing a sample is given in Algorithm 1. For using Algorithm 1, the following properties of the joint posterior distribution given in (\ref{post1}) are needed.\\
\textbf{Properties: }
\begin{enumerate}[(I)]
    \item The conditional posterior distribution of $\lambda$ given $\alpha$ follows the gamma distribution with parameters $(c_1+d, v(\boldsymbol{x})+d_1)$.
    \item The marginal posterior distribution of $\alpha$ is log-concave if $r\geq 1$.
\end{enumerate}
The proof of the properties is given in \cite{kundu2007hybrid}. We now give the algorithm to generate $(\alpha,\lambda)$ from $p(\alpha,\lambda|\boldsymbol{x})$\\
\textbf{Algorithm 1: }
    \begin{itemize}
    \item Step 1 : The log-concave density $p(\alpha|\boldsymbol{x})$ is used to generate $\alpha^{(i)}$, using the approach outlined by Devroye \cite{devroye1984simple}.
    \item Step 2 : Generate $\lambda^{(i)}$ from gamma distribution with parameters $(c_1+d, v(\boldsymbol{x})+d_1)$ using the value $\alpha^{(i)}$.
    \item Step 3: Repeat Steps 1 and 2 for $S$ times to yield $(\alpha^{(i)},\lambda^{ (i)})$, for $i = 1,\cdots,S$.
\end{itemize}
Using the equations (\ref{acc}), (\ref{acc1}) and (\ref{rej}), the expected utility functions of the manufacturer corresponding to the decisions $\mathcal{A}_{wo}$, $\mathcal{A}_w$ and $\mathcal{R}$ are  
\begin{align*}
&\mathcal{U}_{\mathcal{M}}(\mathcal{A}_{Wo}\ |\ \boldsymbol{x},\boldsymbol{\theta})=b_1{\lambda^{-q/\alpha}}\Gamma(q/\alpha+1)-b_2-b_4n-b_5d-b_6\eta ,\\
&\mathcal{U}_{\mathcal{M}}(\mathcal{A}_w\ |\ \boldsymbol{x},\boldsymbol{\theta})=b_1{\lambda^{-q/\alpha}}\Gamma(q/\alpha+1)-b_2+c_w-(c_s+c_w)\left[1-\frac{\gamma\left(\frac{1}{\alpha},w_2^*\right)-\gamma\left(\frac{1}{\alpha},w_1^*\right)}{\alpha\lambda^{\frac{1}{\alpha}}(w_2-w_1)}\right]-b_4n-b_5d-b_6\eta\\
& \text{and }~~\\ &\mathcal{U}_{\mathcal{M}}(\mathcal{R}\ | \ \boldsymbol{x},\boldsymbol{\theta})=b_3-b_4n-b_5d-b_6\eta,
\end{align*}
respectively.
Here, analytical solution cannot be obtained as in the case of the exponential distribution. The manufacturer's expected utility can be obtained using Algorithm \ref{algo2}. 
    \RestyleAlgo{ruled}
\SetKwComment{Comment}{/* }{ */}
\begin{algorithm}[hbt!]
\setcounter{algocf}{1}
 \caption{Manufacturer's expected utility}\label{5}
Consider $\psi_0=0$\\
\For {$u=1,2,\cdots,S_1$}
{

\begin{enumerate}
    \item Generate $x^{(u)}_{(1)},\ldots,x^{(u)}_{(d^{(u)})},d^{(u)}$ and $\eta^{(u)}$ using the algorithm provided in the appendix.
     \item $\text{Generate}~~~\boldsymbol{\theta}^{(1)},\boldsymbol{\theta}^{(2)},\cdots,\boldsymbol{\theta}^{(S_2)}\sim p_\mathcal{C}(\boldsymbol{\theta}\ |\ \boldsymbol{x}^{(u)})$, where  $\boldsymbol{x}^{(u)} =(x^{(u)}_{(1)},\ldots,x^{(u)}_{(d^{(u)})},d^{(u)})$.
     \item Calculate 
     $$\mathcal{U}^0_{\mathcal{C}}(\mathcal{D}\ |\ \boldsymbol{x}^{(u)},\boldsymbol{m})=\frac{1}{S_2}\left[\mathcal{U}^0_{\mathcal{C}}\left(\mathcal{D}\ |\ \boldsymbol{\theta}^{(1)}\right)+\mathcal{U}^0_{\mathcal{C}}\left(\mathcal{D}\ |\ \boldsymbol{\theta}^{(2)}\right)+\cdots+\mathcal{U}^0_{\mathcal{C}}\left(\mathcal{D}\ |\ \boldsymbol{\theta}^{(S_2)}\right)\right],$$
where $\mathcal{D}=\{\mathcal{A},\mathcal{R}\}$  and
$$E_{\boldsymbol{\theta}\ | \ (\boldsymbol{x}^{(u)},\boldsymbol{m})}[Q(\boldsymbol{\theta})]=\frac{1}{S_2}\left[Q\left(\boldsymbol{\theta}^{(1)}\right)+Q\left(\boldsymbol{\theta}^{(2)}\right)+\cdots+Q\left(\boldsymbol{\theta}^{(S_2)}\right)\right].$$
\end{enumerate}
Consider $r_1=I_{\mathcal{X}}(x^{(u)})$, $r_2=I_{\mathcal{Y}}(x^{(u)})$ and $r_3=I_{\mathcal{Z}}(x^{(u)})$\\
\If{ (\textbf{The inequality (\ref{in1}) holds for $\boldsymbol{x}^{(u)}$})}{$r_1=1 $, $r_2=0$, $r_3=0$}
\Else{
\If{(\textbf{The inequality (\ref{in2}) holds for $\boldsymbol{x}^{(u)}$})}
{
$r_1=0$, $r_2=1$, $r_3=0$}
\Else
{
$r_1=0$, $r_2=0$, $r_3=1$
}
}
$\psi_1=(r_1+r_2)(b_1E_{X|(\boldsymbol{x}^{(u)},\boldsymbol{\theta})}[X^q]-b_2)-r_2E_{X|(\boldsymbol{x}^{(u)},\boldsymbol{\theta})}[q(X)]+r_3b_3-b_5d^{(u)}-b_6\eta_0^{(u)}$\\
$\psi_0=\psi_1+\psi_0$
}
$\psi_\mathcal{M}(\boldsymbol{m})=-b_4n+\psi_0/{S_1}$
\label{algo2}
\end{algorithm}
\section{Random decision-making sampling plan}\label{ARA}
 We now consider an approach where the manufacturer’s knowledge about the consumer’s utility, prior and actions are not required. Here, the manufacturer knows the form of the utility function but does not know the parameters of the utility function. Similarly, the form of the consumer's prior distribution is known to the manufacturer but the hyper-parameters are uncertain. The manufacturer estimates the probabilities of the consumer's action after the life test. The consumer's exact action after the life testing is not known to the manufacturer. Therefore, the manufacturer's action does not depend on the consumer's action and hence does not affect the optimal value of the sampling parameter.  

The utility functions for the consumer are discussed in equations (\ref{theta}), (\ref{theta1}) and (\ref{theta2}). The methodology is discussed for the exponential distribution and it is assumed that all the parameters of the utility function of the consumer and hyperparameter of the prior distribution follow uniform distributions independently. We consider that $\alpha_1\sim\mathcal{U}[\alpha_1^1,\alpha_1^2]$, $\beta_1\sim\mathcal{U}[\beta_1^1,\beta_1^2]$, $a_1\sim\mathcal{U}[a_1^1,a_1^2]$ and $a_2\sim \mathcal{U}[a_2^1,a_2^2]$, $a_3\sim\mathcal{U}[a_3^1,a_3^2]$, $L\sim\mathcal{U}[L^1,L^2]$ where $0<\alpha_1^1<\alpha_1^2$, $0<\beta_1^1<\beta_1^2$, $0<a_1^1<a_1^2$, $a_2^1<a_2^1$, $a_3^1<a_3^2$ and $L^1<L^2$. Here a random variable $Z\sim \mathcal{U}[a,b]$ means $Z$ follows uniform distribution whose pdf is given by
 \begin{align*}
     f(z)=\begin{cases}
         \frac{1}{b-a} & \text{for }a\leq z\leq b\\
         0& \text{otherwise }
     \end{cases}
 \end{align*}

    \RestyleAlgo{ruled}
\SetKwComment{Comment}{/* }{ */}
\begin{algorithm}[hbt!]
 \caption{Find the probabilities $P(\mathcal{A}^1_{wo}(\boldsymbol{x})\ | \ \mathcal{R}_0)$, $P(\mathcal{A}^1_{ww}(\boldsymbol{x})\ | \ \mathcal{R}_0)$ and $P(\mathcal{R}^1(\mathcal{x})\ | \ \mathcal{R}_0)$}
Consider a sample $\boldsymbol{x}=(x_{(1)},x_{(2)},\cdots,x_{(d)},d)$\\
$r_{01}=0$, $r_{02}=0$, $r_{03}=0$, $r_{11}=0$, $r_{12}=0$, $r_{13}=0$\\
\For {$i=1,2,\ldots,K$}
{

\begin{enumerate}
    \item Generate $\alpha_1^i\sim\mathcal{U}[\alpha_1^1,\alpha_1^2]$, $\beta_1^i\sim\mathcal{U}[\beta_1^1,\beta_1^2]$, $a_1^i\sim\mathcal{U}[a_1^1,a_1^2]$, $a_2^i\sim \mathcal{U}[a_2^1,a_2^2]$, $a_3^i\sim\mathcal{U}[a_3^1,a_3^2]$\\
    and $L^i\sim\mathcal{U}[L^1,L^2]$.\\
     \item Compute the expressions (\ref{theta}), (\ref{theta2}), $\mathcal{U}^0(\mathcal{A}\ | \ \boldsymbol{x},\boldsymbol{m})$ and $E_{\boldsymbol{\theta}\ | \ (\boldsymbol{x},\boldsymbol{m})}[E_{X\ | \ \boldsymbol{\theta}}[q(X)]]$
\end{enumerate}
\If{ (\textbf{The inequality (\ref{21}) holds})}{$r_{01}=r_{01}+1$}
\Else{
\If{(\textbf{The inequality (\ref{22}) holds})}
{
$r_{02}=r_{02}+1$}
\Else
{
$r_{03}=r_{03}+1$\\
\If{ (\textbf{The inequality (\ref{in1}) holds for $\boldsymbol{x}$})}{$r_{11}=r_{11}+1$}
\Else
{
\If{ (\textbf{The inequality (\ref{in2}) holds for $\boldsymbol{x}$})}{$r_{12}=r_{12}+1$}
\Else{
$r_{13}=r_{13}+1$
}
}
}
}
$P(\mathcal{A}^0_{wo})=r_{01}/K$\\
$P(\mathcal{A}^0_{w})=r_{02}/K$\\
$P(\mathcal{R}^0)=r_{03}/K$\\
$P(\mathcal{A}^1_{wo}(\boldsymbol{x})\ | \ \mathcal{R}_0)=r_{11}/r_{03}$\\
$P(\mathcal{A}^1_{w}(\boldsymbol{x})\ | \ \mathcal{R}_0)=r_{12}/r_{03}$\\
$P(\mathcal{R}^1(\boldsymbol{x})\ | \ \mathcal{R}_0)=r_{13}/r_{03}$
}
\label{algo}
\end{algorithm}
Let $\mathcal{A}^0_{wo}$, $\mathcal{A}^0_{ww}$ and $\mathcal{R}^0$ be the event that the consumer accepts the lot without warranty, with warranty and rejects the lot without inspection, respectively. Also, let $\mathcal{A}^1_{wo}(\boldsymbol{x})\ | \ \mathcal{R}_0$ be the event of acceptance without warranty by the consumer for the data $\boldsymbol{x}$ given that consumer rejects the lot without inspection, $\mathcal{A}^1_{ww}(\boldsymbol{x})\ | \ \mathcal{R}_0$ be the event of acceptance with warranty by the consumer for the data $\boldsymbol{x}$ given that consumer rejects the lot without inspection and $\mathcal{R}^1(\boldsymbol{x})\ | \ \mathcal{R}_0$ be the event of rejection by the consumer for the data $\boldsymbol{x}$ given that consumer rejects the lot without inspection.  We give an Algorithm \ref{algo} to estimate the probabilities of these events. Using estimated probabilities, compute the expected utility of the manufacturer as
    \begin{align*}
    \psi_\mathcal{M}(\boldsymbol{m})=&\int_{\theta}\int_{\boldsymbol{y}}\sum_dP(\mathcal{A}^1_{wo}(\boldsymbol{x})\ | \ \mathcal{R}_0)\left[b_1E_{X\ |\ (\boldsymbol{x},\boldsymbol{\theta})}(X^q)-b_2-b_4n-b_5d-b_6\eta\right]P(\mathcal{A}^1_{ww}(\boldsymbol{x})\ | \ \mathcal{R}_0)\left[b_1E_{X|(\boldsymbol{x},\boldsymbol{\theta})}(X^q)\right.\\
    &\left.-b_2-b_4n-b_5d-b_6\eta\right]+P(\mathcal{R}^1(\boldsymbol{x})\ | \ \mathcal{R}_0)\left[b_3-b_4n-b_5d-b_6\eta\right]f(\boldsymbol{y},d\ | \ \boldsymbol{\theta})p_{\boldsymbol{M}}(\theta)~~d\boldsymbol{x}~d\boldsymbol{y}~d\theta
    \end{align*}
    The optimal decision $\boldsymbol{m}^*=(n^*,r^*,T^*)$ is obtained by maximizing $\psi_{\mathcal{M}}(\boldsymbol{m})$, i.e,
    \begin{align*}
        \boldsymbol{m}^*=\arg\max\limits_{\boldsymbol{m}}\psi_{\mathcal{M}}(\boldsymbol{m}).
    \end{align*}

 \section{Numerical Example}\label{numerical}
 \textbf{Example 1: }The hyperparameters of the prior manufacturer are taken as $\alpha_2=1.8$ and $\beta_2=18$. The cost components and risk parameter of the manufacturer's utility function are chosen as $b_1=10$, $b_2=5$, $b_3=25$, $b_4=1$, $b_5=0.5$, $b_6=0.5$ and $q=0.8$. The warranty time points are $w_1=5$, $w_2=10$, warranty price $c_w=0.5$ and selling price of the product is $c_s=2$. The consumer's required lifetime is $L=15$.
The cost parameters of consumer utility are $a_1=10$, $a_2=5$, $a_3=9$ and the hyperparameters of the consumer are $\alpha_1=2$ and $\beta_1=3$. The optimal design, the manufacturer's utility, the probability of acceptance, rejection after life testing, the expected number of failure and the expected time duration of the test and given in Table \ref{tab 22}. 
\begin{table}[ht!]
\caption{Optimal design for the exponential distribution.}
    \begin{center}
        \begin{tabular}{|cccccc|}
       \hline
     $\boldsymbol{m}^*=(n^*,r^*,T_0^*)$& $\psi_{\mathcal{M}}(\boldsymbol{m}^*)$&$[P(A_{wo}),P(A_{ww}),P(R)]$ & $E[D]$&$E[\xi]$ & $L_w$ \\
         \hline
       (5, 2, 5.75)&70.81&[0.18, 0.24, 0.58]&1.40&3.95&0.09\\
       \hline
    \end{tabular}
    \end{center}
    \label{tab 22}
\end{table}
Next, we consider the effect of parameters on optimum solution. The effect of the parameters $a_3$, $b_1$, $b_3$, $(\alpha_2,\beta_2)$ and $b_6$ are given in the Tables \ref{tab 1}, \ref{tab 2}, \ref{tab 4}, \ref{tab 5} and \ref{tab 6}, respectively.
\begin{table}[ht!]
\caption{Optimal design for different values of $a_3$ for the exponential distribution.}
    \begin{center}
        \begin{tabular}{|ccccccc|}
       \hline
     $a_3$ &$\boldsymbol{m}^*=(n^*,r^*,T_0^*)$& $\psi_{\mathcal{M}}(\boldsymbol{m}^*)$&$[P(A_{wo}),P(A_{ww}),P(R)]$ & $E[D]$&$E[\xi]$ & $L_w$ \\
         \hline
         8&(5, 2, 10.63)&57.69&[0.10, 0.13, 0.77]&1.69&5.46&0.02\\
       9&(5, 2, 5.75)&70.81&[0.18, 0.24, 0.58]&1.40&3.95&0.09\\
       10&(4, 2, 4.03)&81.30&[0.32, 0.32, 0.36]&1.05&3.35&0.19\\
       11&(1, 1, 1.71) &89.38&[0, 0.85, 0.15]&0.15&1.57&0.50\\
       12&(0, 0, 0)&92.03&[0, 1, 0]&0&0&0.92\\
       \hline
    \end{tabular}
    
    \end{center}

    \label{tab 1}
\end{table}

It is observed in Table \ref{tab 1} that when $a_3$ increases for fixed values of other parameters, the probability of acceptance without warranty increases. This is due to the fact that the lot is accepted without warranty, which is the first priority to the consumer because of the low price of the product. When $a_3=12$, the optimal design is $(0,0,0)$ and $P(A_{ww})=1$. This means that the lot is accepted with a warranty and without life-testing. Also, it is observed that when $a_3$ increases, the sample size and time point decrease. This happens because as $a_3$ increases, the value of  $E_{p_{\mathcal{C}}}(\theta)[\mathcal{U}_{\mathcal{C}}(\mathcal{A}\ | \ \theta)]$ gets closer to $a_3$. 
\begin{table}[ht!]
 \caption{Optimal design for different values of $b_1$ for the exponential distribution.}
    \begin{center}
        \begin{tabular}{|ccccccc|}
       \hline
     $b_1$&$\boldsymbol{m}^*=(n^*,r^*,T_0^*)$& $\psi_{\mathcal{M}}(\boldsymbol{m}^*)$&$[P(A_{wo}),P(A_{ww}),P(R)]$ & $E[D]$&$E[\xi]$ & $L_w$\\
         \hline
         3&(5, 1, 3.30)&29.31&[0, 0.31, 0.69]&0.52&3.12&0.21\\
       5&(2, 1, 6.55)&38.59&[0, 0.37, 0.63]&0.63&3.99&0.11\\
       10&(5, 2, 5.75)&70.81&[0.18, 0.24, 0.58]&1.40&3.95&0
       09\\
       15&(8, 5, 10.21) &106.85&[0.34, 0.13, 0.53]&3.75&7.89&0.07\\
       20&(10, 7, 11.76)&144.26&[0.36, 0.14, 0.50]&5.28&9.47&0.07\\
       \hline
    \end{tabular}
    
    \end{center}

    \label{tab 2}
\end{table}

We provide effect of $b_1$ for fixed values of other parameters in Table \ref{tab 2}, it is observed that when $b_1$ increases, the probability of rejection of the lot decreases. This is due to the fact that if $b_1$ increases, the manufacturer's utility of acceptance increases. Also, we see that sample size increases with $b_1$. Similarly, in Table \ref{tab 5},  when $b_3$ increases, the probability of rejection of the lot increases. This is due to the fact that if $b_3$ increases, the manufacturer's utility of rejection increases. Also, it is seen that the sample size decreases with $b_3$.

\begin{table}[ht!]
\caption{Optimal design for different values of $b_3$ for the exponential distribution.}
    \begin{center}
        \begin{tabular}
    {|ccccccc|}
       \hline  $b_3$& $\boldsymbol{m}^*=(n^*,r^*,T_0^*)$& $\psi_{\mathcal{M}}(\boldsymbol{m}^*)$&$[P(A_{wo}),P(A_{ww}),P(R)]$ & $E[D]$&$E[\xi]$ & $L_w$\\
         \hline
         15&(8, 4, 7.46)&65.22&[0.32, 0.14, 0.54]&2.94&5.65&0.07\\
      35&(5, 2, 5.75)&76.62&[0.18, 0.24, 0.58] &1.40&3.95&0.09\\
     55&(4, 2, 7.67)&88.37&[0.17, 0.24, 0.69] &1.42&5.21&0.09\\
      110 &(2, 1, 11.50)&123.10&[0.23, 0, 0.77]&0.77&5.43&0\\
       \hline
    \end{tabular}
    
    \end{center}

    \label{tab 5}
\end{table}
\begin{table}[ht!]
\caption{Optimal design for different values of $(\alpha_2,\beta_2)$ for the exponential distribution.}
    \begin{center}
        \begin{tabular}
    {|ccccccc|}
       \hline  $(\alpha_2,\beta_2)$& $\boldsymbol{m}^*=(n^*,r^*,T_0^*)$& $\psi_{\mathcal{M}}(\boldsymbol{m}^*)$&$[P(A_{wo}),P(A_{ww}),P(R)]$ & $E[D]$&$E[\xi]$ & $L_w$ \\
         \hline
         (2.8, 18)&(2, 1, 6.55)&33.21&[0, 0.22, 0.78]&0.78&3.13&0.14\\
         (2.8, 28)&(5, 2, 5.75)&48.42&[0.14, 0.24, 0.62]&1.48&3.88&0.11\\
       (1.8, 28)&(8, 5, 10.13)&112.96&[0.51, 0.15, 0.34]&3.10&8.93&0.06\\
      (18,180)&(2, 1, 6.55)&30.40&[0, 0.28, 0.72]&0.72&3.69&0.20\\
       \hline
    \end{tabular}
    
    \end{center}

    \label{tab 4}
\end{table}
\begin{table}[ht!]
\caption{Optimal design for different values of $b_6$ for the exponential distribution.}
    \begin{center}
        \begin{tabular}
    {|ccccccc|}
       \hline  $b_6$& $\boldsymbol{m}^*=(n^*,r^*,T_0^*)$& $\psi_{\mathcal{M}}(\boldsymbol{m}^*)$&$[P(A_{wo}),P(A_{ww}),P(R)]$ & $E[D]$&$E[\xi]$ & $L_w$\\
         \hline
        0& (5, 5, 33.29)&76.41&[0.33, 0.13, 0.54]&4.24&23.06&0.06\\
      0.1&(6, 5, 17.37)&74.36&[0.33, 0.13, 0.54]&3.91&13.18&0.07\\
     1&(6, 2, 4.60)&68.95&[0.19, 0.24, 0.57]&1.39&3.19&0.09\\
     3&(8, 2, 3.29)&63.67&[0.20, 0.23, 0.57]&1.37&2.30&0.09\\
       \hline
    \end{tabular}
    
    \end{center}

    \label{tab 6}
\end{table}


\textbf{Example 2: }  Here we consider RDSP and assume that the lifetime follows the exponential distribution with pdf $f_\theta(t)=\frac{1}{\theta}e^{-\frac{t}{\theta}}$, $t>0$, $\theta>0$. The parameters of the consumer's utility and prior are random. We take $a_1\sim\mathcal{U}(10,20)$, $a_2\sim\mathcal{U}(1,7)$, $a_3\sim\mathcal{U}(3,12)$ $\alpha_1\sim\mathcal{U}(1,8)$, $\beta_1\sim\mathcal{U}(1.5,3.5)$ and $L\sim\mathcal{U}(12,18)$. The cost parameters of the manufacturer's utility are the same as in Example 1. The optimal design is given in Table \ref{tab 81} and the effect of the parameter $b_1$ is given in Table \ref{tab 8}.

\begin{table}[ht!]
\caption{Optimal design for different values of $b_1$ for the exponential distribution under RDSP approach.}
    \begin{center}
        \begin{tabular}
    {|cccccc|}
       \hline  $\boldsymbol{m}^*=(n^*,r^*,T_0^*)$& $\psi_{\mathcal{M}}(\boldsymbol{m}^*)$&$[P(A_{wo}),P(A_{ww}),P(R)]$ & $E[D]$&$E[\xi]$ & $L_w$\\
         \hline
      (3, 3, 4.73)&85.08&[0.50, 0.37, 0.13]&1.03&4.59&1.14\\
  
       \hline
    \end{tabular}
    
    \end{center}

    \label{tab 81}
\end{table}
\begin{table}[ht!]
\caption{Optimal design for different values of $b_1$ for the exponential distribution under RDSP approach.}
    \begin{center}
        \begin{tabular}
    {|ccccccc|}
       \hline  $b_1$& $\boldsymbol{m}^*=(n^*,r^*,T_0^*)$& $\psi_{\mathcal{M}}(\boldsymbol{m}^*)$&$[P(A_{wo}),P(A_{ww}),P(R)]$ & $E[D]$&$E[\xi]$ & $L_w$\\
         \hline
        5& (2, 1, 5.45)&39.37&[0.46, 0.27, 0.27]&0.57&3.56&0.83\\
      10&(3, 3, 4.73)&85.08&[0.50, 0.37, 0.13]&1.03&4.59&1.14\\
     15&(5,4,4.93)&132.38&[0.68, 0.23, 0.09]&1.62&4.40&0.70\\
     20&(5, 5, 4.94)&180.13&[0.69, 0.22, 0.09]&1.74&4.90&0.68\\
       \hline
    \end{tabular}
    
    \end{center}

    \label{tab 8}
\end{table}

In Table \ref{tab 8}, it is observed that when $b_1$ increases, the probability of rejection of the lot decreases. This is due to the fact that if $b_1$ increases, the manufacturer's utility of acceptance increases, which is similar to the non-random case for exponential distribution given in subsection \ref{exp}. Also, it is seen that sample size increases with $b_1$.

\section{Application}\label{real}
The proposed methodology of determining optimum RASP is illustrated by using data on  failure times (in hours) of the air-conditioning system of plane "7913” taken from Proschan \cite{proschan1963theoretical}. Mondal and Kundu \cite{mondal2020bayesian} analyzed this data dividing by 100 and fitted a Weibull distribution whose pdf is given by
\begin{align*}
    f(x\ | \ \alpha, \lambda)=\alpha\lambda x^{\alpha-1} \exp(-\lambda x^\alpha).
\end{align*}
The ML estimates of $\alpha$ and $\lambda$ obtained in Mondal and Kundu \cite{mondal2020bayesian} for the complete data are
$\hat{\alpha}=1.123$ and $\hat{\lambda}=1.126$, respectively. Suppose that an aircraft component manufacturer (manufacturer) is negotiating with an aircraft manufacturer (consumer) for the sale of air conditioning systems. Both manufacturer and consumer agree that the lifetime follows Weibull distribution. We assume that the manufacturer considered gamma priors for $\alpha$ and $\lambda$ with respective means 1.123 and 1.126. The consumer also considered gamma priors for $\alpha$ and $\lambda$. The mean values of $\alpha$ and $\lambda$ considered by the consumer are 1.123 and  2.252, respectively. The hyper-parameters of gamma priors of parameters $\alpha$ and $\lambda$ for the consumer are $u_1=11.23$, $v_1=10$ and $c_1=22.52$, $d_1=10$, respectively and for the manufacturer $u_2=112.6$ $v_2=100$ and $c_2=112.3$, $d_2=100$ and, respectively. The consumer believes that the mean lifetime of the product is $0.465$ hour and the manufacturer believes that the mean lifetime of the product is $0.862$ hour.

Next, we consider the cost components and risk parameter of the manufacturer, which are taken as $b_1=3000$, $b_2=1000$, $b_3=250$, $b_4=30$, $b_5=10$, $b_6=5$ and $q=1$. The warranty time points are taken as $w_1=0.2\text{ hour}$ and $w_2=0.3 \text{ hour}$, warranty price $c_w=200$ and selling price of the product is $c_s=2000$. The consumer's requirement of a minimum lifetime is $L=0.5$ hour. The cost parameters of consumer utility are $a_1=12000$, $a_2=5000$ and $a_3=8250$.  For this value, the optimum RASP is $(n,r,T_0)=(10,5,0.481)$. The probability of acceptance of the lot without a warranty after life testing is 0.35; the probability of acceptance with a warranty after life testing is 0.24 and rejection of the lot after life testing is 0.41.

Next, we have to carry out a life test under the optimum RASP. For illustration,
we generate Type-I hybrid censored data based on the optimum life testing plan $\boldsymbol{m}=(10,5,0.481)$ from the Weibull distribution with $\alpha=1.123$ and $\lambda=1.126$. Then we calculate $e_1=\mathcal{U}_{\mathcal{C}}^0(\mathcal{A}\ | \ \boldsymbol{x}, \boldsymbol{m})-a_3$ and $e_2=\mathcal{U}_{\mathcal{C}}^0(\mathcal{A}\ | \ \boldsymbol{x}, \boldsymbol{m})-E_{\boldsymbol{\theta}\ | \ (\boldsymbol{x},\boldsymbol{m})}[E_{X\ | \ \boldsymbol{\theta}}(q(X)]+c_w-a_3$. If $e_1<0$, the consumer accepts the lot without warranty; if $e_1>0$ and $e_2<0$, the consumer accepts the lot with warranty; and if $e_1,e_2>0$, the consumer rejects the lot. Some data sets and corresponding decisions are given in Table \ref{dectable} for illustration purposes.
\begin{table}[hbt!] 
    \centering
     \caption{The data sets and consumer's acceptance utility and his/her decision}
   \resizebox{\textwidth}{!}{ \begin{tabular}
{|c|cccccccc|}
    \hline
       $i$& $x_1$ &$x_2$ &$x_3$&$x_4$&$x_5$ & $e_1$ &$e_2$& Decision \\
        \hline
     $1$& $0.243$  &$0.354$&$0.457$&-&-&$-512.83$&$-787.44$&accept the lot without warranty\\
     $2$&$0.020$ &$0.155$ &$0.272$ &$0.423$&-&$645.61$&$107.01$&reject the lot\\
     $3$&  $0.150$& $0.220$& $0.250$ &$0.465$&-&$42.01$& $-356.93$&accpet the lot with warranty\\
      $4$& $0.103$& $0.151$& $0.230$& $0.405$&$0.420$&$441.06$&$-45.22$&accept the lot with warranty\\
       $5$& $0.038$&-&-&-&-&$-173.01$&$-534.39$&accept the lot without warranty\\
        \hline
    \end{tabular}
   }
    \label{dectable}
\end{table}
 
 \section*{Conclusion}\label{conclu}
 In this work, we have considered determination of optimum Bayesian RASP with optional warranty under hybrid censoring. The work can be extended to other censoring schemes. We have considered exponential and Weibull distributions for illustration. The proposed methodology can be extended to other lifetime distributions with an appropriate choice of the prior distributions.
Here we have considered that only one consumer negotiated with one manufacturer. However, in many situations, the consumer negotiates with more than one manufacturer for taking a decision to buy a product. This will be considered in future studies. 
 \bibliographystyle{plain}
\bibliography{citiation}

\begin{thebibliography}{10}

\bibitem{bhattacharya2014optimum}
Ritwik Bhattacharya, Biswabrata Pradhan, and Anup Dewanji.
\newblock Optimum life testing plans in presence of hybrid censoring: a cost
  function approach.
\newblock {\em Applied Stochastic Models in Business and Industry},
  30(5):519--528, 2014.

\bibitem{budhiraja2019optimum}
Sonal Budhiraja and Biswabrata Pradhan.
\newblock Optimum reliability acceptance sampling plans under progressive
  type-i interval censoring with random removal using a cost model.
\newblock {\em Journal of Applied Statistics}, 46(8):1492--1517, 2019.

\bibitem{chen2007bayesian}
Jianwei Chen, Kim-Hung Li, and Yeh Lam.
\newblock Bayesian single and double variable sampling plans for the weibull
  distribution with censoring.
\newblock {\em European Journal of Operational Research}, 177(2):1062--1073,
  2007.

\bibitem{childs2003exact}
A~Childs, B~Chandrasekar, N~Balakrishnan, and D~Kundu.
\newblock Exact likelihood inference based on type-i and type-ii hybrid
  censored samples from the exponential distribution.
\newblock {\em Annals of the Institute of Statistical Mathematics},
  55(2):319--330, 2003.

\bibitem{devroye1984simple}
Luc Devroye.
\newblock A simple algorithm for generating random variates with a log-concave
  density.
\newblock {\em Computing}, 33(3):247--257, 1984.

\bibitem{epstein1954truncated}
Benjamin Epstein.
\newblock Truncated life tests in the exponential case.
\newblock {\em The Annals of Mathematical Statistics}, pages 555--564, 1954.

\bibitem{happich2001utility}
Michael Happich.
\newblock Utility functions for life years and health status: An additional
  remark.
\newblock Technical report, Diskussionspapier, 2001.

\bibitem{kundu2007hybrid}
Debasis Kundu.
\newblock On hybrid censored weibull distribution.
\newblock {\em Journal of Statistical Planning and Inference},
  137(7):2127--2142, 2007.

\bibitem{liang2013optimal}
TaChen Liang and Ming-Chung Yang.
\newblock Optimal bayesian sampling plans for exponential distributions based
  on hybrid censored samples.
\newblock {\em Journal of Statistical Computation and Simulation},
  83(5):922--940, 2013.

\bibitem{lin2008exact}
Chien-Tai Lin, Yen-Lung Huang, and N~Balakrishnan.
\newblock Exact bayesian variable sampling plans for the exponential
  distribution based on type-i and type-ii hybrid censored samples.
\newblock {\em Communications in Statistics—Simulation and
  Computation{\textregistered}}, 37(6):1101--1116, 2008.

\bibitem{lin2010corrections}
Chien-Tai Lin, Yen-Lung Huang, and N~Balakrishnan*.
\newblock Corrections on “exact bayesian variable sampling plans for the
  exponential distribution based on type-i and type-ii hybrid censored
  samples”.
\newblock {\em Communications in Statistics-Simulation and Computation},
  39(7):1499--1505, 2010.

\bibitem{lin2002bayesian}
Yu-Pin Lin, TaChen Liang, and Wen-Tao Huang.
\newblock Bayesian sampling plans for exponential distribution based on type i
  censoring data.
\newblock {\em Annals of the Institute of Statistical Mathematics},
  54(1):100--113, 2002.

\bibitem{lindley1991evidence}
Dennis~V Lindley and Nozer~D Singpurwalla.
\newblock On the evidence needed to reach agreed action between adversaries,
  with application to acceptance sampling.
\newblock {\em Journal of the American Statistical Association},
  86(416):933--937, 1991.

\bibitem{lindley1993adversarial}
Dennis~V Lindley and Nozer~D Singpurwalla.
\newblock Adversarial life testing.
\newblock {\em Journal of the Royal Statistical Society: Series B
  (Methodological)}, 55(4):837--847, 1993.

\bibitem{menke1969determination}
Warren~W Menke.
\newblock Determination of warranty reserves.
\newblock {\em Management Science}, 15(10):B--542, 1969.

\bibitem{mondal2020bayesian}
Shuvashree Mondal and Debasis Kundu.
\newblock Bayesian inference for weibull distribution under the balanced joint
  type-ii progressive censoring scheme.
\newblock {\em American Journal of Mathematical and Management Sciences},
  39(1):56--74, 2020.

\bibitem{murthy1992product}
DNP Murthy and WR~Blischke.
\newblock Product warranty management—iii: A review of mathematical models.
\newblock {\em European Journal of Operational Research}, 63(1):1--34, 1992.

\bibitem{proschan1963theoretical}
Frank Proschan.
\newblock Theoretical explanation of observed decreasing failure rate.
\newblock {\em Technometrics}, 5(3):375--383, 1963.

\bibitem{rufo2016bayesian}
Mar{\'\i}a~Jes{\'u}s Rufo, Jacinto Mart{\'\i}n, and Carlos~J P{\'e}rez.
\newblock A bayesian negotiation model for quality and price in a
  multi-consumer context.
\newblock {\em Reliability Engineering \& System Safety}, 147:132--141, 2016.

\bibitem{rufo2014adversarial}
Mar{\'\i}a~Jes{\'u}s Rufo, Jacinto Mart{\'\i}n, and CJ~P{\'e}rez.
\newblock Adversarial life testing: A bayesian negotiation model.
\newblock {\em Reliability Engineering \& System Safety}, 131:118--125, 2014.

\bibitem{thomas1983optimum}
Marlin~U Thomas.
\newblock Optimum warranty policies for nonreparable items.
\newblock {\em IEEE transactions on reliability}, 32(3):282--288, 1983.

\bibitem{wijsman1985useful}
Robert~A Wijsman.
\newblock A useful inequality on ratios of integrals, with application to
  maximum likelihood estimation.
\newblock {\em Journal of the American Statistical Association},
  80(390):472--475, 1985.

\bibitem{william1998statistical}
Q~Meeker William and Luis~A Escobar.
\newblock Statistical methods for reliability data.
\newblock {\em A. Wiley Interscience Publications}, page 639, 1998.

\bibitem{yeh1990optimal}
Lam Yeh.
\newblock An optimal single variable sampling plan with censoring.
\newblock {\em Journal of the Royal Statistical Society: Series D (The
  Statistician)}, 39(1):53--66, 1990.

\bibitem{yeh1994bayesian}
Lam Yeh.
\newblock Bayesian variable sampling plans for the exponential distribution
  with type i censoring.
\newblock {\em The Annals of Statistics}, pages 696--711, 1994.

\bibitem{yeh1995bayesian}
Lam Yeh and STB Choy.
\newblock Bayesian variable sampling plans for the exponential distribution
  with uniformly distributed random censoring.
\newblock {\em Journal of Statistical Planning and Inference}, 47(3):277--293,
  1995.

\end{thebibliography}

\appendix
\section*{Appendix}
\setcounter{lemma}{0}
    \renewcommand{\thelemma}{\Alph{section}\arabic{lemma}}
To prove Results \ref{r1} and \ref{r2}, we need to prove the following lemma.
\begin{lemma}\label{th1}
Suppose $n$ identical units are put on a life test under the Type-I HCS. The lifetimes of the units are iid with cdf $F_{{\theta}}(x)=1-\exp(-x/\theta)$. Let $p_{\mathcal{C}}(\theta)$ be the prior of the consumer. Then for any positive decreasing function $h(\theta)$, the posterior expectation of $h(\theta)$ is decreasing in $v(\boldsymbol{x})$ for fixed $d$. 
\end{lemma}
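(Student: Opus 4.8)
The plan is to exploit the inverse-gamma conjugacy derived above: for fixed $d$ the posterior of $\theta$ is inverse gamma with shape $\alpha_1+d$ and scale $\beta_1+v(\boldsymbol{x})$. Writing $a=\alpha_1+d$ (a constant, since $d$ is held fixed) and $s=\beta_1+v(\boldsymbol{x})$, the map $v(\boldsymbol{x})\mapsto s$ is strictly increasing, so it suffices to prove that the posterior mean $\Psi(s):=\int_0^\infty h(\theta)\,\frac{s^{a}}{\Gamma(a)}\,\theta^{-a-1}e^{-s/\theta}\,d\theta$ is nonincreasing in $s$.

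First I would reduce the inverse-gamma family to a pure scale family. If $Y\sim\mathrm{Gamma}(a,1)$ has density $\frac{1}{\Gamma(a)}y^{a-1}e^{-y}$, then the change of variables $\theta=s/Y$ produces exactly the inverse-gamma$(a,s)$ density displayed above. Consequently $\Psi(s)=E_Y\!\left[h(s/Y)\right]$, and the scale parameter $s$ enters only through the deterministic monotone map $s\mapsto s/Y$.

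The key step is then a coupling argument. Fix one random variable $Y>0$ and use it for all values of $s$ simultaneously. For each realization of $Y$ the quantity $s/Y$ is strictly increasing in $s$, so because $h$ is decreasing the composite $h(s/Y)$ is nonincreasing in $s$, pointwise in $Y$. Taking expectations preserves this: for $s_1<s_2$ we have $h(s_2/Y)\le h(s_1/Y)$ almost surely, whence $\Psi(s_2)=E_Y[h(s_2/Y)]\le E_Y[h(s_1/Y)]=\Psi(s_1)$. Since $s=\beta_1+v(\boldsymbol{x})$ is increasing in $v(\boldsymbol{x})$, this establishes that the posterior expectation of $h(\theta)$ is decreasing in $v(\boldsymbol{x})$ for fixed $d$. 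Equivalently, the family $\{\mathrm{IG}(a,s)\}_s$ is stochastically increasing in $s$, so expectations of decreasing functions necessarily decrease.

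I expect this argument to be essentially obstacle-free once the scale representation $\theta=s/Y$ is spotted; the only point that warrants a word of care is well-definedness of $\Psi$. Since $h\ge 0$, all the integrals are defined in $[0,\infty]$ and the inequality $\Psi(s_2)\le\Psi(s_1)$ holds in the extended reals regardless of finiteness, so no integrability hypothesis beyond positivity is needed (and for the functions used in Results~\ref{r1} and \ref{r2} the posterior means are in fact finite whenever $a>1$). A more computational alternative would differentiate $\Psi$ under the integral sign; using $E[1/\theta\mid s]=a/s$ one obtains $\Psi'(s)=-\mathrm{Cov}\!\left(h(\theta),1/\theta\right)$, which is $\le 0$ because $h(\theta)$ and $1/\theta$ are both decreasing in $\theta$ and hence positively associated. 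I would nonetheless favor the coupling proof, as it is transparent and sidesteps the differentiation-under-the-integral technicalities.
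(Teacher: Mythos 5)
Your proof is correct, but it follows a genuinely different route from the paper's. You lean entirely on conjugacy: under the inverse-gamma prior the posterior is $\mathrm{IG}(\alpha_1+d,\,\beta_1+v(\boldsymbol{x}))$, a pure scale family in $s=\beta_1+v(\boldsymbol{x})$, so the representation $\theta=s/Y$ with $Y\sim\mathrm{Gamma}(\alpha_1+d,1)$ plus a pointwise coupling gives stochastic monotonicity of the posterior in $s$, hence monotonicity of the posterior expectation of $h$. The paper never uses the form of the prior at all: it writes the posterior expectation as a ratio of integrals, $\int_0^\infty f_i(\theta)g_1(\theta)\,d\theta \,/\, \int_0^\infty f_i(\theta)g_2(\theta)\,d\theta$ with $f_i(\theta)=\theta^{-d}\exp(-v_i/\theta)$, $g_1=h\,p_{\mathcal{C}}$, $g_2=p_{\mathcal{C}}$, observes that $f_1/f_2=\exp[(v_2-v_1)/\theta]$ and $g_1/g_2=h(\theta)$ are both decreasing in $\theta$, and invokes Theorem 2 of Wijsman \cite{wijsman1985useful}, a monotone-likelihood-ratio/total-positivity result for such ratios. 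What each approach buys: yours is elementary and self-contained (no external theorem, no differentiation under the integral), while the paper's establishes the lemma for an \emph{arbitrary} consumer prior $p_{\mathcal{C}}(\theta)$, which is what the statement literally asserts --- your scale-family coupling collapses for a non-conjugate prior, since the posterior then no longer depends on $v(\boldsymbol{x})$ through a scale parameter. Because the lemma is only ever applied (Results \ref{r1} and \ref{r2}) under the inverse-gamma prior of Section \ref{cd}, your proof suffices for the paper's downstream use, but it is strictly narrower than the stated claim. Incidentally, your closing covariance remark, $\Psi'(s)=-\mathrm{Cov}(h(\theta),1/\theta)\le 0$ by positive association of two decreasing functions, is the variant closest in spirit to the paper's Wijsman step, which rests on the same correlation-inequality principle.
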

\begin{proof}
From equation (\ref{e1}), the likelihood function is given by
\begin{align*}
    L(\theta\ |\ \boldsymbol{x},\boldsymbol{m})\propto \theta^{-d}\exp\left(-\frac{v(\boldsymbol{x})}{\theta}\right).
\end{align*}
The posterior expectation of $h(\theta)$, denoted by $A((v(\boldsymbol{x}),d)\ |\ m)$, is given by
\begin{align*}
    A((v(\boldsymbol{x}),d)\ |\ \boldsymbol{m})=\frac{\int_0^\infty h(\theta)\theta^{-d}\exp\left(-\frac{v(\boldsymbol{x})}{\theta}\right)p_{\mathcal{C}}(\theta)~d\theta}{\int_0^\infty\theta^{-d}\exp\left(-\frac{v(\boldsymbol{x})}{\theta}\right)p_{\mathcal{C}}(\theta)~d\theta}.
\end{align*}
Consider two points $v(\boldsymbol{x})_1$ and $v(\boldsymbol{x})_2$ such that $v(\boldsymbol{x})_1<v(\boldsymbol{x})_2$, it suffices to show that $A((v(\boldsymbol{x})_1,d)\ |\ \boldsymbol{m})<(>)A((v(\boldsymbol{x})_2,d)\ |\ \boldsymbol{m})$ when $h(\theta)$ is increasing (decreasing) function. Let $f_1(\theta)=\theta^{-d}\exp\left(-\frac{v(\boldsymbol{x})_1}{\theta}\right)$, $f_2(\theta)=\theta^{-d}\exp\left(-\frac{v(\boldsymbol{x})_2}{\theta}\right)$, $g_1(\theta)=h(\theta)p_{\mathcal{C}}(\theta)$ and $g_2(\theta)=p_{\mathcal{C}}(\theta)$. Thus
\begin{align*}
    A((v(\boldsymbol{x})_1,d)\ |\ \boldsymbol{m})=\frac{\int_0^\infty f_1(\theta)g_1(\theta)~d\theta}{\int_0^\infty f_1(\theta)g_2(\theta)~d\theta}
\end{align*}
and
\begin{align*}
    A((v(\boldsymbol{x})_2,d)\ |\ \boldsymbol{m})=\frac{\int_0^\infty f_2(\theta)g_1(\theta)~d\theta}{\int_0^\infty f_2(\theta)g_2(\theta)~d\theta}
\end{align*}
We assume all integrals are finite. Clearly, $f_2(\theta)$ and $g_2(\theta)$ are non negative functions of $\theta$. Note that $$\frac{f_1(\theta)}{f_2(\theta)}=\exp\left[\frac{(v(\boldsymbol{x})_2-v(\boldsymbol{x})_1)}{\theta}\right]$$ is decreasing in $\theta$ when $v(\boldsymbol{x})_1<v(\boldsymbol{x})_2$. and $g_1(\theta)/g_2(\theta)=h(\theta)$ which is decreasing function in $\theta$ for $\theta>0$. By Theorem 2 in Wijsman \cite{wijsman1985useful}, we get $A((v(\boldsymbol{x})_1,d)\ |\ \boldsymbol{m})>A((v(\boldsymbol{x})_2,d)\ |\ \boldsymbol{m})$ when $v(\boldsymbol{x})_1<v(\boldsymbol{x})_2$. Therefore, $A((v(\boldsymbol{x}),d)\ | \ \boldsymbol{m})$ is decreasing in $v(\boldsymbol{x})$ for fixed $d$.
\end{proof}
\vspace{0.5cm}\\
\textbf{Proof of Result \ref{r1}:}
\begin{proof}
    Note that
\begin{align*}
    A_1((v(\boldsymbol{x}),d)\ | \ \boldsymbol{m})=&\int_0^\infty \left(\int_0^L [1-\exp(-x/\theta)]dx\right) p_{\mathcal{C}}(\theta\ | \ \boldsymbol{x},\boldsymbol{m}) d\theta\\
    &=\int_0^\infty h(\theta) p_{\mathcal{C}}(\theta\ | \boldsymbol{x},\boldsymbol{m}) d\theta,
\end{align*}
where $h(\theta)=\int_0^L [1-\exp(-x/\theta)]dx.$ Note that $h(\theta)$ is decreasing in $\theta$. By Lemma \ref{th1}, it follows that $A_1((v(\boldsymbol{x},d)\ | \ \boldsymbol{m})$ is decreasing in $v(\boldsymbol{x})$ for fixed $d$.
\end{proof}
\vspace{0.5cm}\\
\textbf{Proof of Result \ref{r2}:}
\begin{proof}
    We have
\begin{align*}
  &  \frac{a_1}{L}A_1((v(\boldsymbol{x}),d)\ | \ \boldsymbol{m})-A_2((v(\boldsymbol{x}),d)\ | \ \boldsymbol{m})\\
    &=\int_0^\infty \left[\int_0^L \frac{a_1}{L}[1-\exp(-x/\theta)]dx -\frac{c_s}{w_2-w_1}\int_{w_1}^{w_2} [1-\exp(-x/\theta)]dx\right] p_{\mathcal{C}}(\theta\ | \ \boldsymbol{x},\boldsymbol{m}) d\theta \\   &=\int_0^\infty h(\theta) p_{\mathcal{C}}(\theta\ | \boldsymbol{x},\boldsymbol{m}) d\theta,
\end{align*}
where $h(\theta)=\int_0^L \frac{a_1}{L}[1-\exp(-x/\theta)]dx -\frac{c_s}{w_2-w_1}\int_{w_1}^{w_2} [1-\exp(-x/\theta)]dx.$ Note that $h(\theta)$ is decreasing in $\theta$. By Lemma \ref{th1}, it follows that $A_1((v(\boldsymbol{x},d)\ | \ \boldsymbol{m})$ is decreasing in $v(\boldsymbol{x})$ for fixed $d$.
\end{proof}
\vspace{0.5cm}\\
\textbf{Proof of Result \ref{g1}:}
\begin{proof}
  \allowdisplaybreaks\begin{align*}
    &\int_0^\infty \int_{x_1}^{x_2} \lambda^{-b-1} \exp(-(a+c)/\lambda)g(y-c,1/\lambda,p) dy d\lambda\\
    & =\int_0^\infty \int_{x_3}^{x_2}  \lambda^{-b} \exp(-(a+c)/\lambda) \frac{\lambda^{-p}}{\Gamma(p)}(y-c)^{p-1} \exp(-(y-c)/\lambda)dy d\lambda~~~~~~~~[x_3=\min(c,x_1)]\\
    &=\frac{1}{\Gamma(p)}\int_0^\infty \int_{x_3}^{x_2}  \lambda^{-(b+p)-1} \exp(-(a+y)/\lambda) (y-c)^{p-1}dy d\lambda\\
    &=\frac{1}{\Gamma(p)}\int_{x_3}^{x_2} (y-c)^{p-1}\frac{\Gamma(b+p)}{(a+y)^{b+p}}dy\\
    &=\frac{\Gamma(b+p)}{\Gamma(p)}\int_0^{x-c} \frac{z^{p-1}}{(a+z+c)^{b+p}}dz\\
    &=\frac{\Gamma(b+p)}{\Gamma(p)(a+c)^{b+p}}\int_{x_3-c}^{x_2-c} \frac{z^{p-1}}{\left(1+\frac{z}{a+c}\right)^{b+p}}dz\\
    &=\frac{\Gamma(b+p)}{\Gamma(p)(a+c)^{b}}\int_{(x_3-c)/(a+c)}^{(x_2-c)/(a+c)} \frac{v^{p-1}}{\left(1+v\right)^{b+p}}dv\\
    &=\frac{\Gamma(b)}{(a+c)^{b}}\frac{B_{\eta_2}(p,b)-B_{\eta_1}(p,b)}{B(p,b)}\\
    &=\frac{\Gamma(b)}{(a+c)^{b}}\left[I_{\eta_2}(p,b)-I_{\eta_1}(p,b)\right]
\end{align*}  
\end{proof}
\vspace{0.5cm}\\
\textbf{Expressions for $\boldsymbol{P(A_{wo}), P(A_w), P(R), L_w, E[D], E[\eta]}$: }The expressions $E[D]$ and $E[\eta]$ are given in Liang and Yang \cite{liang2013optimal}. For the other expressions of the result, we put the value of $P(v(d)=x,D=d)$ which is given in Theorem \ref{the1}.
\begin{align*}
    P(A_{wo})=\frac{\beta_2^{\alpha_2}}{(\beta_2+nT_0)^{\alpha_2}}I(nT_0>c(0))+\frac{\beta_2^{\alpha_2}}{\Gamma(\alpha_2)}\left[\sum_{d=1}^r\sum_{i=0}^d\binom{d}{i}\binom{n}{d}(-1)^iH_{(0,0,i,d)}{(\zeta_3,\zeta_4)}+H_{(0,0,r-n,r)}{(\zeta_3,\zeta_4)}\right.\\
 \left.+r\binom{n}{r}\sum_{k=1}^r\frac{(-1)^{k}}{n-r+k}\binom{r-1}{k-1}H_{(0,0,k,r)}{(\zeta_3,\zeta_4)}\right],
\end{align*}
\begin{align*}
    P(A_{w})=\frac{\beta_2^{\alpha_2}}{(\beta_2+nT_0)^{\alpha_2}}I(c'(0)<nT_0\leq c(0))+\frac{\beta_2^{\alpha_2}}{\Gamma(\alpha_2)}\left[\sum_{d=1}^r\sum_{i=0}^d\binom{d}{i}\binom{n}{d}(-1)^iH_{(0,0,i,d)}{(\zeta_2,\zeta_3)}+H_{(0,0,r-n,r)}{(\zeta_2,\zeta_3)}\right.\\
    \left.+r\binom{n}{r}\sum_{k=1}^r\frac{(-1)^{k}}{n-r+k}\binom{r-1}{k-1}H_{(0,0,k,r)}{(\zeta_1,\zeta_2)}\right],
\end{align*}
\begin{align*}
    P(R)=\frac{\beta_2^{\alpha_2}}{(\beta_2+nT_0)^{\alpha_2}}I(nT_0\leq c(0))+\frac{\beta_2^{\alpha_2}}{\Gamma(\alpha_2)}\left[\sum_{d=1}^r\sum_{i=0}^d\binom{d}{i}\binom{n}{d}(-1)^iH_{(0,0,i,d)}{(\zeta_1,\zeta_2)}+H_{(0,0,r-n,r)}{(\zeta_1,\zeta_2)}\right.\\
    \left.+r\binom{n}{r}\sum_{k=1}^r\frac{(-1)^{k}}{n-r+k}\binom{r-1}{k-1}H_{(0,0,k,r)}{(\zeta_3,\zeta_2)}\right],
\end{align*}
\begin{align*}
    L_w=\frac{c_s}{w_2-w_1}\left(\sum_{j=1}^2(-1)^{j-1}\left[\frac{\beta_2^{\alpha_2}}{\alpha_2(\beta_2+w_j+nt)^{\alpha_2-1}}I(c'(0)<nT<c(0))+\sum_{i=0}^d\binom{d}{i}\binom{n}{d}(-1)^i\frac{\beta_2^{\alpha_2}}{\Gamma(\alpha_2)}\sum_{d=1}^rH_{(w_1,1,i,d)}{(\zeta_2,\zeta_3)}\right.\right.\\
    \left.\left.+\frac{\beta_2^{\alpha_2}}{\Gamma(\alpha_2)}H_{(w_1,1,r-n,r)}{(\zeta_2,\zeta_3)}+\frac{\beta_2^{\alpha_2}}{\Gamma(\alpha_2)}r\binom{n}{r}\sum_{k=1}^r\frac{(-1)^{k}}{n-r+k}\binom{r-1}{k-1}H_{(w_1,1,k,r)}{(\zeta_2,\zeta_3}\right]\right)+c_sP(A_w)-c_wP(A_w),
\end{align*}
 \begin{align*}
 E[D]=&{\beta_2^{\alpha_2}}\left[\sum_{j=0}^{r-1}j\binom{n}{j}\sum_{i=0}^j\binom{j}{i}(-1)^{j-i}\frac{1}{(T_0(n-i)+\beta_2)^{\alpha_2}}+r\sum_{j=r}^n\binom{n}{j}\sum_{i=0}^j\binom{j}{i}(-1)^{j-i}\frac{1}{(T_0(n-i)+\beta_2)^{\alpha_2}}\right]
\end{align*}
and
\begin{align*}
   E[\eta] =&\left[T_0\left(1-\sum_{j=r}^n\binom{n}{j}\sum_{i=0}^j\binom{j}{i}(-1)^{j-i}\frac{\beta_2^{\alpha_2}}{(T_0(n-i)+\beta_2)^{\alpha_2}}\right)\right.+r\binom{n}{r}\sum_{i=0}^{r-1}(-1)^{r-1-i}\binom{r-1}{j}\left[\frac{\beta_2}{(\alpha_2-1)}\right.\\
    &-\left.\left.\frac{\alpha_2T_0(n-i)+\beta_1}{(\alpha_2-1)}\frac{\beta_2^{\alpha_2}}{(T_0(n-i)+\beta_2)^{\alpha_2}}\right]\frac{1}{(n-i)^2}\right].\\
\end{align*}
\textbf{Generate sample for HCS: } 
The algorithm generates a pseudo-random type-I hybrid censored sample with $n$ units and $r$ failures and truncation time $T_0$ \cite{william1998statistical}. Define $U_{(0)}^{(u)}=0$, start with $i=1$, and generate the sequence as follows:
\begin{enumerate}
\item Generate $\boldsymbol{\theta}_i^{(u)}$ from $p_\mathcal{M}(\boldsymbol{\theta})$.
\item A pseudo-random observation $U_i^{(u)}$ is generated form a U(0,1)  . Compute $U_{(i)}^{(u)}=1-[1-U_{(i-1)}^{(u)}]\times(1-U_i^{(u)})^{1/(n-i+1)}$
 and $x_{(i)}^{(u)}=F^{-1}[U_{(i)}^{(u)};\boldsymbol{\theta}_i^{(u)}]$
\item If $i> r$ or $x_{(i)}^{(u)}> T_0$, stop; and take $d^{(u)}=(i-1)$ and $\eta_0^{(u)}=x_{(i-1)}^{(u)}$ the sample data is represented by $\boldsymbol{x}^{(u)}=(x_{(1)}^{(u)},\cdots,x_{(i-1)}^{(u)},d^{(u)})$.
\item If $i\leq r$ and $x_{i}^{(u)}\leq T_0$, increase the value of $i$ by 1 and return to Step 1.
\end{enumerate}

\end{document}